\crefname{hypothesis}{Hypothesis}{Hypotheses}
\Crefname{ALC@unique}{Line}{Lines}
\numberwithin{theorem}{section}
\colorlet{texcscolor}{blue!50!black}
\colorlet{texemcolor}{red!70!black}
\colorlet{texpreamble}{red!70!black}
\colorlet{codebackground}{black!25!white!25}
\lstdefinestyle{siamlatex}{%
  style=tcblatex,
  texcsstyle=*\color{texcscolor},
  texcsstyle=[2]\color{texemcolor},
  keywordstyle=[2]\color{texemcolor},
  moretexcs={cref,Cref,maketitle,mathcal,text,headers,email,url},
}
\DeclareTotalTCBox{\code}{ v O{} }
{ 
  fontupper=\ttfamily\color{black},
  nobeforeafter,
  tcbox raise base,
  colback=codebackground,colframe=white,
  top=0pt,bottom=0pt,left=0mm,right=0mm,
  leftrule=0pt,rightrule=0pt,toprule=0mm,bottomrule=0mm,
  boxsep=0.5mm,
  #2}{#1}
\patchcmd\newpage{\vfil}{}{}{}
\title{Adjustments to Computer Models via Projected Kernel Calibration%
  \thanks{Submitted to the editors May 3, 2017.
\funding{This work is supported by NSF grant DMS 1564438 and also by the National Center for Mathematics and Interdisciplinary Sciences in CAS	and NSFC grants 11501551, 11271355 and 11671386.}}}
\author{Rui Tuo%
  \thanks{School of Industrial and Systems Engineering, Georgia Institute of Technology, GA 30309 and Academy of Mathematics and Systems Science, Chinese Academy of Sciences, Beijing, China 100190 (\email{tuorui@amss.ac.cn}).}%
}
\begin{document}
\maketitle

\begin{tcbverbatimwrite}{tmp_\jobname_abstract.tex}
\begin{abstract}
Identification of model parameters in computer simulations is an important topic in computer experiments.
We propose a new method, called the projected kernel calibration method, to estimate these model parameters. The proposed method is proven to be asymptotic normal and semi-parametric efficient. As a frequentist method, the proposed method is as efficient as the $L_2$ calibration method proposed by Tuo and Wu [Ann. Statist. 43 (2015) 2331-2352]. On the other hand, the proposed method has a natural Bayesian version, which the $L_2$ method does not have. This Bayesian version allows users to calculate the credible region of the calibration parameters without using a large sample approximation. We also show that, the inconsistency problem of the calibration method proposed by Kennedy and
O’Hagan [J. R. Stat. Soc. Ser. B. Stat. Methodol. 63 (2001) 425-464] can be rectified by a simple modification of the kernel matrix.
\end{abstract}

\begin{keywords}
  Computer Experiments, Uncertainty Quantification, Semi-parametric Methods, Reproducing Kernel Hilbert Spaces, Orthogonal Gaussian Process Models
\end{keywords}

\begin{AMS}
  62P30, 62A01, 62F12
\end{AMS}
\end{tcbverbatimwrite}
\input{tmp_\jobname_abstract.tex}

\section{Introduction}

With the development of mathematical modeling and computational techniques, there has become a wide spread use of computer simulations to study physical processes which can be expensive to observe or experiment with. An important task in computer modeling is to identify the model parameters involved in the computer code. For example, many computer simulators are built based on physical equations such as conservation laws. In these equations, there are constants or parameters, such as physical constants or inherent attributes of the physical objects, which cannot be controlled during the physical processes. Sometimes the values of these parameters are not known or cannot be measure directly from available physical experiments, and thus they need to be estimated using other physical observations. The activity of adjusting these model parameters are known as \textit{calibration} of the computer model, and these parameters are call \textit{calibration parameters}.

Kennedy and O'Hagan \cite{kennedy2001bayesian} propose a Bayesian framework for the calibration of computer models. They point out that there is a distance between the computer outputs and the physical responses, because the computer models are normally built under assumptions and simplifications which do not strictly hold true in reality. They also suggest to take this distance into account in the statistical model and a Gaussian process model is used to fit this discrepancy function. The Kennedy-O'Hagan (abbreviated as KO thereafter) model has been widely used in the literature in many disciplines. See \cite{higdon2004combining,goldstein2004probabilistic,higdon2008computer,bayarri2007framework,joseph2014engineering,gramacy2015calibrating}, among others. However, because of the use of the discrepancy function, the method has an identifiability issue. It is shown in \cite{tuo2016calibration} that, as the sample size goes to infinity, the limit value of the KO parameter estimation depends on the choice of the prior and thus it should be considered as inconsistent.

A mathematical framework for calibration is suggested by \cite{tuo2014efficient,tuo2016calibration}. a new calibration method, called the $L_2$ calibration is introduced in \cite{tuo2014efficient}, which is proven to enjoy nice frequentist properties including consistency, asymptotic normality and semi-parametric efficiency. However, the $L_2$ calibration only provides a point estimate of the calibration parameter, which do not meet the practical needs of \textit{uncertainty quantification}. Recall that we are also interested in the uncertainty of the estimate. In practice, users usually prefer Bayesian methods because the physical sample size is usually rather small and Bayesian methods are more flexible in small sample size. Because $L_2$ calibration is a two-step approach, it does not admit a simple Bayesian version.

In this work, we propose a novel calibration approach called the \textit{projected kernel calibration}. The main idea is to use an orthogonality condition arisen in the framework of \cite{tuo2014efficient}. In view of this orthogonality condition, we project the kernel function onto some linear subspace and use the projected kernel to build basis functions to construct non-parametric estimators. The projected kernel calibration is proven to be asymptotically equivalent to the $L_2$ calibration, and it also has a natural Bayesian interpretation. The Bayesian version of the proposed method has a similar expression as the method by \cite{kennedy2001bayesian}. This implies that we can rectify the inconsistency problem of KO method by a simple modification.

The reminder of this article is organized as follows. In Section \ref{sec_background}, we state the aim of calibration for computer models and review the framework suggested by \cite{tuo2014efficient,tuo2016calibration}. In Section \ref{sec_KernelsProperties}, we introduce the projected kernel functions and study their properties. We propose a new type of calibration method in Section \ref{sec_PKC}. The asymptotic properties and Bayesian interpretation are also given in this section. Concluding remarks are given in Section \ref{sec_discussion}. Technical proofs are given in Section \ref{sec_proofs}.



%
%
%

\section{Background}\label{sec_background}

The calibration of computer models is the activity of adjusting the model parameters of a computer simulator so that the outputs of the computer code fit the physical responses.

Denote the experimental region of the physical experiments by $\Omega$, which is assumed to be a convex and compact subregion of $\mathbb{R}^d$. Suppose we have a sequence of physical observations, denoted as $(x_i,y_i), i=1,2,\ldots, n$. We follows a standard modeling assumption imposed in the computer experiments literature like \cite{kennedy2001bayesian} and use the following nonparametric model to link the physical response $y_i$ and the covariant $x_i$:
\begin{eqnarray}
y_i=\zeta(x_i)+e_i,
\end{eqnarray}
where $\zeta$ is an unknown continuous function over $\Omega$, referred to as the \textit{true process}; $e_i$'s are independent and identically distributed random variables with mean zero and finite variance. For simplicity, we suppose the design points $\{x_i\}$ is a sequence of independent and identically distributed random variables following the uniform distribution over $\Omega$.

\subsection{Purpose of Calibration}

Let $y^s(x,\theta)$ be the output of the \textit{deterministic} computer simulator given the control variable $x$ and the calibration parameter $\theta$. Denote the parameter space of $\theta$ by $\Theta$. Suppose $\Theta$ is a compact subset of $\mathbb{R}^q$. As discussed in \cite{tuo2014efficient,tuo2016calibration}, there are two types of computer simulators. In the first case, the cost of running the computer code is negligible so that we can treat $y^s$ simply as a known function and we refer such computer codes as ``cheap'' codes. In the second case, each run of the computer code is costly so that only a limited number of runs can be conducted. Such computer codes are said to be ``expensive''. For the expensive code, a standard approach in computer experiments is to run the computer code over a selected set of scattered points and then build a surrogate model, denoted as $\hat{y}^s$. Although the computer codes are said to be expensive, they are much less costly than the corresponding physical runs. Therefore, the number of computer runs should be larger than the physical sample size. Moreover, the computer code is deterministic while the physical responses are noisy. As a consequence, it is reasonable to believe that the approximation error of the surrogate modeling is much smaller than the statistical estimation error for the true process $\zeta$. For convenience, in this work we only consider the case where the computer code is cheap. For expensive computer codes, the main results of this work is still valid provided that $\|y^s-\hat{y}^s\|_{L_\infty(\Omega)}$ and $\|\frac{\partial y^s}{\partial\theta}-\frac{\partial \hat{y}^s}{\partial\theta}\|_{L_\infty(\Omega)}$ are sufficiently small. We refer to \cite{tuo2014efficient} for detailed conditions and arguments.

Calibration for computer models, first suggested by \cite{kennedy2001bayesian}, is to find a good value for the vector of calibration parameters so that the computer response surface ``matches'' the physical observations best. Kennedy and O'Hagan \cite{kennedy2001bayesian} do not give a mathematical definition of the optimal values of the calibration parameters. In \cite{tuo2014efficient}, the optimal choice of the calibration parameters is defined by the $L_2$ projection
\begin{eqnarray}
\theta^*:=\operatorname*{argmin}_{\theta\in\Theta}\|\zeta(\cdot)-y^s(\cdot,\theta)\|_{L_2(\Omega)}.\label{l2projection}
\end{eqnarray}

\subsection{Review on $L_2$ Calibration}\label{sec_l2}

Kennedy and O'Hagan \cite{kennedy2001bayesian} propose a Bayesian approach to estimate the calibration parameter. However, it is known that the KO model suffers from some identifiability issue. As a consequence of this identifiability problem, the KO method can give highly unstable answers which rely heavily on the prior distributions. See \cite{plumlee2016bayesian} for more discussions. From a frequentist point of view, the estimator is inconsistent because of the non-identifiability. In work by \cite{tuo2016calibration} studies the asymptotic behavior of the KO approach and point out that it may render unreasonable results.

The $L_2$ calibration method is proposed by \cite{tuo2014efficient}, which proceeds in two steps. First, estimate the true process in a nonparametric manner by solving the following smoothing problem in the native space
\begin{eqnarray}\label{L2step1}
\hat{\zeta}_n=\operatorname*{argmin}_{f\in\mathcal{N}_\Phi(\Omega)}\frac{1}{n}\sum_{i=1}^n(y^p_i-f(x_i))^2+ \lambda\|f\|^2_{\mathcal{N}_\Phi(\Omega)}.
\end{eqnarray}
A detailed discussion about the native space $\mathcal{N}_\Phi(\Omega)$ will be given in Section \ref{sec_nativespace}. Next, calculate the $L_2$ calibration estimate defined as
\begin{eqnarray}\label{L2step2}
\hat{\theta}^{L_2}=\operatorname*{argmin}_{\theta\in\Theta}\|\zeta(\cdot)-y^s(\cdot,\theta)\|_{L_2(\Omega)}.
\end{eqnarray}

A remarkable theoretical property of the $L_2$ calibration is the \textit{semi-parametric efficiency}. \cite{tuo2014efficient} prove that under certain conditions the $L_2$ calibration estimator has the asymptotic representation
\begin{eqnarray}\label{benchmark}
\hat{\theta}_n^{L_2}-\theta^*=-2 V^{-1}\left\{\frac{1}{n}\sum_{i=1}^n e_i \frac{\partial y^s}{\partial \theta}(x_i,\theta^*)\right\}+o_p(n^{-1/2}),
\end{eqnarray}
with
\begin{eqnarray*}
	V=E\left[\frac{\partial^2}{\partial\theta\partial\theta^T}(\zeta(x_i)-y^s(x_i,\theta^*))^2\right],
\end{eqnarray*}
and thus is asymptotically normally distributed.
Moreover, the $L_2$ calibration is semi-parametric efficient, in the sense that there does not exist a regular estimator with an even smaller asymptotic variance.

The semi-parametric efficiency is a benchmark for new calibration estimators to be found later.

\subsection{Goal of This Work}

In view of the semi-parametric efficiency property, the $L_2$ calibration is not improvable. However, there remains some drawbacks of the $L_2$ calibration method. In practice, researchers and engineers wish not only to obtain a point estimate for the calibration parameters, but also a confidence/credible region, in order to assess the uncertainty of the estimator. Although the asymptotic normality of the $L_2$ calibration admits a natural asymptotic confidence ellipsoid, in reality it is usually too rough to use the asymptotic distribution of the $L_2$ calibration estimator directly because the physical sample size is typically rather small. Furthermore, its performance is deteriorated by the estimation of the unknown parameters in the asymptotic distribution. As stated before, the high cost of physical experiments prohibits a large number of physical runs and this is why we seek the help of computer experiments.

Because small sample problems are common in expensive physical or computer experiments, researchers and engineers favor Bayesian approaches which are more flexible to sample size. For the calibration of computer models, the KO-type estimators \cite{kennedy2001bayesian,higdon2004combining} are of this kind. However, \cite{tuo2016calibration} conduct some asymptotic analysis and show that these estimators may render unreasonable results for calibration. 

In order to find a stable Bayesian approach for calibration, one may want to construct a Bayesian version of the $L_2$ calibration. Unfortunately, it seems that there does not exist a simple Bayesian version or interpretation of the $L_2$ calibration. The main reason is that the $L_2$ calibration is a two-step approach, which does not allow us to construct an explicit expression of the posterior density.

The goal of this work is to find an estimator for the calibration parameters which enjoys two properties. First, the estimator is semi-parametric efficient, i.e., the estimator admits an asymptotic representation as shown in (\ref{benchmark}). Second, the estimator can be interpreted from a Bayesian point of view, i.e., the point estimator is the posterior mode of a Bayesian estimator.

\section{Projected Kernels and Their Properties}\label{sec_KernelsProperties}

We first briefly introduce the work of \cite{plumlee2016orthogonal} on Orthogonal Gaussian process models,
which inspires us to introduce the projected kernels.

\subsection{Review on Orthogonal Gaussian Process Models}\label{sec_ogp}

Let $\mathcal{G}$ be a finite-dimensional subspace of $L_2(\Omega)$ with $\dim \mathcal{G}=m$. For any $f\in L_2(\Omega)$, let $\mathcal{P}_{\mathcal{G}}f$ be the projection of $f$ onto $\mathcal{G}$, given by
\begin{eqnarray*}
	\mathcal{P}_{\mathcal{G}}f=\sum_{i=1}^m \langle f,e_i\rangle_{L_2(\Omega)} e_i,
\end{eqnarray*}
where $\{e_1,\dots,e_m\}$ forms an orthonormal basis of $\mathcal{G}$. Define the perpendicular component
\begin{eqnarray*}
	\mathcal{P}_{\mathcal{G}}^\perp f=f-\mathcal{P}_{\mathcal{G}}f.
\end{eqnarray*}

Let $Z(\cdot)$ be a Gaussian process over $\Omega$ with mean zero and covariance function $K(\cdot,\cdot)$. It can be shown that $\mathcal{P}_{\mathcal{G}}^\perp Z(\cdot)$ is also a Gaussian process with mean zero, and its covariance function is
\begin{eqnarray}
&&K_\mathcal{G}(s,t)=Cov\left(Z(s)-\sum_{i=1}^m \langle Z,e_i\rangle_{L_2(\Omega)} e_i(s), Z(t)-\sum_{i=1}^m \langle Z,e_i\rangle_{L_2(\Omega)} e_i(t)\right)\nonumber\\
&&=K(s,t)-\sum_{i=1}^m e_i(s)\int_\Omega K(x,t)e_i(x) d x
-\sum_{j=1}^m e_j(t)\int_\Omega K(s,y)e_j(y) d y\nonumber\\
&&\text{~~~~}+\sum_{i=1}^m\sum_{j=1}^m e_i(s)e_j(t)\int_{\Omega\times\Omega} K(x,y)e_i(x)e_j(y) d x d y,\label{Kg}
\end{eqnarray}
for $s,t\in\Omega$.
The process $\mathcal{P}_{\mathcal{G}}^\perp Z$ is referred to as the orthogonal Gaussian process of $Z$ with respect to $\mathcal{G}$ in \cite{plumlee2016orthogonal}.


\subsection{Projected Kernels}

Although $K_\mathcal{G}$ in (\ref{Kg}) is derived from a Gaussian process, we can regard (\ref{Kg}) as the definition of $K_\mathcal{G}^\Omega$ without a Gaussian process context. For the ease of the mathematical treatments later, we introduce some linear operators. 


Given a measurable set $\Omega\subset\mathbb{R}^d$, a finite dimensional subspace $\mathcal{G}\subset L_2(\Omega)$, define the linear operators $\mathcal{P}_\mathcal{G}^{(1)},\mathcal{P}_\mathcal{G}^{(2)}:L_2(\Omega\times\Omega)\rightarrow L_2(\Omega\times\Omega)$ as
\begin{eqnarray*}
	(\mathcal{P}_\mathcal{G}^{(1)} u)(x,y)&=&\sum_{i=1}^m e_i(x)\int_\Omega u(s,y) e_i(s) d s,\\
	(\mathcal{P}_\mathcal{G}^{(2)} u)(x,y)&=&\sum_{i=1}^m e_i(y)\int_\Omega u(x,t) e_i(t) d t,
\end{eqnarray*}
for $u\in L_2(\Omega\times\Omega), x,y\in\Omega$,
where $\{e_i\}_{i=1}^m$ forms an orthonormal basis of $\mathcal{G}$. It is easily verified that the definitions of $\mathcal{P}_\mathcal{G}^{(1)}$ and $\mathcal{P}_\mathcal{G}^{(2)}$ are independent of the choice of $\{e_i\}$. 
Define the bilinear operator $\kappa: L_2(\Omega\times\Omega)\times L_2(\Omega)\rightarrow L_2(\Omega)$ as
\begin{eqnarray*}
	\kappa(u,f)&=&\int_\Omega u(\cdot,y) f(y) d y,
\end{eqnarray*}
for $u\in L_2(\Omega\times\Omega),f\in L_2(\Omega)$.

Now we can define the projected kernel.

\begin{definition}
	Suppose $\Omega$ is a measurable subset of $\mathbb{R}^d$, $K(\cdot,\cdot)$ is a positive definite function over $\Omega\times\Omega$, $\mathcal{G}$ is a finite dimensional subspace of $L_2(\Omega)$. Define the projected kernel of $K$ with respect to $\mathcal{G}$ by
	\begin{eqnarray}
	K_\mathcal{G}=K-\mathcal{P}_\mathcal{G}^{(1)}K-\mathcal{P}_\mathcal{G}^{(2)}K+\mathcal{P}_\mathcal{G}^{(1)}\mathcal{P}_\mathcal{G}^{(2)}K. \label{Def1}
	\end{eqnarray}
\end{definition}


By the symmetricity of $K$, it can be verified that
\begin{eqnarray*}
	(\mathcal{P}_\mathcal{G}^{(1)}K)(x,y)&=&(\mathcal{P}_\mathcal{G}^{(2)}K)(y,x), \\
	(\mathcal{P}_\mathcal{G}^{(1)}\mathcal{P}_\mathcal{G}^{(2)}K)(x,y)&=&(\mathcal{P}_\mathcal{G}^{(1)}\mathcal{P}_\mathcal{G}^{(2)}K)(y,x).
\end{eqnarray*}
Therefore, $K_\mathcal{G}$ in (\ref{Def1}) is also a symmetric function.
It is easily verified that the projected kernel $K_\mathcal{G}$ can be expressed as the right hand side of (\ref{Kg}) if $\{e_i\}_{i=1}^m$ is an orthonormal basis of $\mathcal{G}$.
Let $\mathcal{G}^\perp$ be the orthogonal complement of $\mathcal{G}$ in $L_2(\Omega)$.




 It is proved in \cite{plumlee2016orthogonal} that $K_\mathcal{G}$ defined in (\ref{Def1}) is semi-positive definite under mild conditions. However, here we have to pursue a stronger result. Our goal is to show that $K_\mathcal{G}$ is in fact positive definite. The positive definiteness is crucial in both orthogonal Gaussian process modeling \cite{plumlee2016orthogonal} and the interpolation scheme to be discussed later in this work. In Gaussian process modeling, the positive definiteness of the covariance function is necessary to ensure that the covariance matrix is invertible for all possible design sets (with distinct points), which allows for finite likelihood values. In the interpolation scheme discussed later, we also need $K_\mathcal{G}$ to be positive definite to guarantee the uniqueness of the interpolant.

Now we give some sufficient conditions that ensure the positive definiteness of $K_\mathcal{G}$. Using the terminology of Gaussian process modeling, we assume that the Gaussian process $Z(\cdot)$ discussed in Section \ref{sec_ogp} is stationary, i.e., there exists $R(\cdot)$, such that $K(x,y)=R(x-y)$ for all $x,y\in\Omega$. Besides, we assume that $R(x)$ can be defined for all $x\in\mathbb{R}^d$ and $H(x,y)=R(x-y)$ is positive definite over $\mathbb{R}^d\times\mathbb{R}^d$.
In Gaussian process modeling and radial basis functions interpolation, stationary covariance functions over $\mathbb{R}^d$ like the power exponential and the Mat\'{e}rn families are commonly used \cite{santner2003design,wendland2005scattered}. Theorem \ref{Thpositivedefinite} shows that a large class of stationary kernels can produce positive definite projected kernels.

\begin{theorem}\label{Thpositivedefinite}
	Suppose $R\in\L_1(\mathbb{R}^d)$ possesses a real-valued Fourier transform satisfying $\tilde{R}>0$ almost everywhere. Define $K_\mathcal{G}^\Omega$ by (\ref{Kg}) with $K(x,y)=R(x-y)$. Then $K_\mathcal{G}$ is positive definite for all $\Omega\subset \mathbb{R}^d$ with Lebesgue measure $\mathfrak{m}(\Omega)\in(0,+\infty)$ and all finite-dimensional $\mathcal{G}\in L_2(\Omega)$.
\end{theorem}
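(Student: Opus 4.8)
The plan is to verify strict positive definiteness directly from the definition: I would fix distinct points $x_1,\dots,x_N\in\Omega$ and a vector $c=(c_1,\dots,c_N)^T\neq 0$, and show that the quadratic form $Q:=\sum_{k,l=1}^N c_k c_l K_\mathcal{G}(x_k,x_l)$ is strictly positive. The guiding idea is that $Q$ equals the \emph{energy} of a suitably chosen signed measure on $\mathbb{R}^d$, and that the projection in (\ref{Def1}) only appends an absolutely continuous piece which can never cancel the atomic piece produced by the point evaluations. This is precisely the extra structure that upgrades the classical radial-basis-function positivity (available for $R$ itself on $\mathbb{R}^d$) to a statement about the projected kernel.

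Concretely, set $a_i=\sum_k c_k e_i(x_k)$ and $g=\sum_{i=1}^m a_i e_i$, extended by zero outside $\Omega$; since $\mathfrak{m}(\Omega)<\infty$ we have $g\in L_2(\Omega)\subset L_1(\Omega)$, so
\[
\mu:=\sum_{k=1}^N c_k\,\delta_{x_k}-g\,dx
\]
is a finite signed Borel measure on $\mathbb{R}^d$. Expanding the four terms of (\ref{Kg}), pushing the sums over $k,l$ inside as integrals against $\mu$, and using that $R$ is even (a consequence of $\tilde{R}$ being real-valued), a routine computation identifies $Q$ with the energy $\mathcal{E}(\mu):=\iint_{\mathbb{R}^d\times\mathbb{R}^d}R(x-y)\,d\mu(x)\,d\mu(y)$.

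I would then move to the Fourier side and establish the Parseval-type identity $\mathcal{E}(\mu)=(2\pi)^{-d}\int_{\mathbb{R}^d}\tilde{R}(\omega)\,|\hat{\mu}(\omega)|^2\,d\omega$, where $\hat{\mu}$ is the Fourier transform of the finite measure $\mu$. Because $\tilde{R}>0$ almost everywhere, this forces $\mathcal{E}(\mu)\geq 0$, with equality only if $\hat{\mu}=0$ almost everywhere; as $\hat{\mu}$ is continuous and bounded, it would then vanish identically, whence $\mu=0$ by uniqueness of Fourier transforms of finite measures. To exclude this I would appeal to uniqueness of the Lebesgue decomposition: the atomic part $\sum_k c_k\delta_{x_k}$ is nonzero whenever $c\neq 0$ (the $x_k$ being distinct) and is mutually singular with the absolutely continuous part $g\,dx$, so the two cannot cancel and $\mu\neq 0$. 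Consequently $\mathcal{E}(\mu)>0$, giving $Q>0$ for every $\Omega$ of finite positive measure and every finite-dimensional $\mathcal{G}$.

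The step I expect to be the main obstacle is the rigorous justification of the spectral identity for $\mathcal{E}(\mu)$, since $\tilde{R}$ is assumed only positive (and $R\in L_1$), not $\tilde{R}\in L_1(\mathbb{R}^d)$, so Fourier inversion and the Fubini interchange are not immediately available. I would handle this either by first showing $\tilde{R}\in L_1$ under the stated hypotheses (using continuity of $R$ at the origin together with $\tilde{R}\geq 0$ and monotone convergence against a Gaussian), after which the identity follows by a direct interchange, or, failing such continuity, by mollifying $\mu$ with a Gaussian approximate identity $G_\varepsilon$, applying Plancherel to the smooth measure $\mu*G_\varepsilon$, and passing to the limit $\varepsilon\downarrow 0$ (monotone convergence on the right since $\tilde{R}\geq 0$, and approximate-identity convergence of the energies on the left). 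The remaining ingredients, namely the algebraic energy identity and the singular-versus-absolutely-continuous argument, are elementary once this analytic point is settled.
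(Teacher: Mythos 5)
Your proposal is correct, and it takes a genuinely different route from the paper's own proof. The paper argues contrapositively through a Gaussian-process realization: it constructs a stationary process $Z$ with correlation function $R$, observes that failure of strict positive definiteness of $K_\mathcal{G}$ is equivalent to an almost-sure linear degeneracy $\sum_j \alpha_j\bigl(Z(x_j)-\sum_i e_i(x_j)\int_\Omega Z e_i\bigr)=0$, correlates this identity with $Z(x_0)$ to obtain the convolution equation $\sum_j\alpha_j R(x_0-x_j)=(R*g_0^E)(x_0)$ for all $x_0\in\mathbb{R}^d$, cancels $\tilde{R}>0$ a.e.\ after Fourier transforming, and concludes $\alpha=0$ because a nonzero trigonometric polynomial cannot lie in $L_2(\mathbb{R}^d)$ whereas $\widetilde{g_0^E}$ must (Plancherel, since $g_0^E\in L_1\cap L_2$). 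You dispense with the stochastic detour entirely: your algebraic identity $Q=\mathcal{E}(\mu)$ for $\mu=\sum_k c_k\delta_{x_k}-g\,dx$ does check out against (\ref{Kg}), and you then rule out $\mu=0$ by mutual singularity of the atomic and absolutely continuous parts. Note that both proofs ultimately rest on the same dichotomy---an atomic measure and an $L_1\cap L_2$ density cannot share a Fourier transform---merely dressed differently (trigonometric polynomial $\notin L_2$ versus uniqueness of the Lebesgue decomposition). What the paper's route buys is lighter analysis: it needs only the $L_1$ convolution theorem and Plancherel, never Fourier inversion of $R$, so it never confronts whether $\tilde{R}\in L_1$---precisely the obstacle you identify as the hard step of your approach. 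What your route buys is self-containedness (no Bochner/spectral theorem, no almost-sure degeneracy argument) and a transparent treatment of the equality case in the energy inequality. Both of your proposed fixes for the spectral identity are viable, with one caveat: the mollification route still requires $(G_\varepsilon*R*G_\varepsilon)(x_k-x_l)\to R(x_k-x_l)$ at the specific difference points, which again needs continuity of $R$ there; but since the theorem only makes sense for a pointwise-defined kernel, and the paper itself treats $R$ as a continuous correlation function normalized so that $R(0)=1$, continuity is an implicit standing hypothesis you may use freely, and then your first fix (continuity at the origin plus $\tilde{R}\geq 0$ implies $\tilde{R}\in L_1$ and Fourier inversion) closes the argument cleanly.
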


We note that the Fourier transform of the correlation function of a stationary process is referred to as its spectral density. The spectral densities of commonly used stationary Gaussian processes can be found in books such as \cite{stein1999interpolation} or \cite{santner2003design}.
By Theorem \ref{Thpositivedefinite}, for many common correlation functions like the Gaussian and the Mat\'{e}rn families, the covariance functions for the orthogonal processes are positive definite under a broad choice of $\Omega$ and $\mathcal{G}$.

\subsection{Native Spaces}\label{sec_nativespace}

In this section we investigate the native spaces (also referred to as the reproducing kernel Hilbert spaces) generated by $K$ and $K_\mathcal{G}^\Omega$ in (\ref{Kg}).

Given a symmetric and positive definite continuous function $\Phi$ over $\Omega\times\Omega$, define the linear space
\begin{eqnarray*}
	F_\Phi(\Omega)=\left\{\sum_{i=1}^N \beta_i\Phi(\cdot,x_i):N\in\mathbb{N},\beta_i\in\mathbb{R},x_i\in \Omega\right\}
\end{eqnarray*}
and equip this space with the bilinear form
\begin{eqnarray*}
	\Big\langle\sum_{i=1}^N \beta_i\Phi(\cdot,x_i),\sum_{j=1}^M \gamma_j\Phi(\cdot,y_j)\Big\rangle_\Phi:=\sum_{i=1}^N\sum_{j=1}^M\beta_i\gamma_j\Phi(x_i,y_j).
\end{eqnarray*}
Define the native space $\mathcal{N}_\Phi(\Omega)$ as the closure of $F_\Phi(\Omega)$ under the inner product $\langle\cdot,\cdot\rangle_\Phi$. The inner product of $\mathcal{N}_\Phi(\Omega)$, denoted as $\langle\cdot,\cdot\rangle_{\mathcal{N}_\Phi(\Omega)}$, is induced by $\langle\cdot,\cdot\rangle_\Phi$. Define the native norm as $\|f\|_{N_\Phi(\Omega)}=\sqrt{\langle f,f\rangle_{\mathcal{N}_\Phi(\Omega)}}$. We refer to \cite{wendland2005scattered} for more discussions about the native spaces and their properties.

Because $K$ is positive definite over $\Omega\times\Omega$, it generates native space $\mathcal{N}_K(\Omega)$. Theorem \ref{Thpositivedefinite} guarantees the positive definiteness of many commonly encountered projected kernels. From now on, we always assume that the projected kernel $K_\mathcal{G}$ is positive definite so that it generates the native space $\mathcal{N}_{K_\mathcal{G}}(\Omega)$.

Since $K_\mathcal{G}$ is induced from $K$, it is of interest to explore certain relationship between $\mathcal{N}_K(\Omega)$ and $\mathcal{N}_{K_\mathcal{G}}(\Omega)$. Because the definition of $K_\mathcal{G}$ also involves projection in $L_2(\Omega)$, it is natural to start with embedding the native spaces into $L_2(\Omega)$. Such a result is given by Lemma \ref{th_embedding} in Section \ref{sec_proofs}, which collects the results from Lemma 10.27 and Proposition 10.28 of \cite{wendland2005scattered}.

In practice, it is reasonable to assume $\mathcal{G}\subset\mathcal{N}_K(\Omega)$, because the space $\mathcal{N}_K(\Omega)$ is regarded as the set of ``all possible functions of interest'' and thus should contain all ``regular'' functions like the elements of $\mathcal{G}$. Besides, we will show that the native space $\mathcal{N}_{K_\mathcal{G}}(\Omega)$ enjoys some nice properties by assuming $\mathcal{G}\subset\mathcal{N}_K(\Omega)$.

Intuitively, we expect that $\mathcal{N}_{K_\mathcal{G}}(\Omega)$ is formed by certain ``projected functions'' from $\mathcal{N}_K(\Omega)$. A natural question is whether we can bound $\|\mathcal{P}_\mathcal{G}^\perp f\|_{\mathcal{N}_{K_\mathcal{G}}(\Omega)}$ by a multiple of $\|f\|_{\mathcal{N}_K(\Omega)}$. Such a result is given in Theorem \ref{th_norminequality}, which is an important step to establish the asymptotic theory in Section \ref{sec_asymptotic}.

\begin{theorem}\label{th_norminequality}
	Suppose $\Omega\subset \mathbb{R}^d$ is compact and $K$ is a symmetric positive definite function over $\Omega\times\Omega$. Let $\mathcal{G}\subset \mathcal{N}_K(\Omega)$ be finite dimensional. Then the following statements are true:
	\begin{enumerate}[(i)]
		\item For any $f\in\mathcal{N}_K(\Omega)$, we have $\mathcal{P}_\mathcal{G}^\perp f\in\mathcal{N}_{K_\mathcal{G}}(\Omega)$ with
		\begin{eqnarray}
		\|\mathcal{P}_\mathcal{G}^\perp f\|_{\mathcal{N}_{K_\mathcal{G}}(\Omega)}\leq C_1 \|f\|_{\mathcal{N}_K(\Omega)}, \label{normineq}
		\end{eqnarray}
		where
		\begin{eqnarray*}
			C_1=1+\sup_{\substack{g\in\mathcal{G}\\ \|g\|_{L_2(\Omega)}=1}}\|g\|_{\mathcal{N}_K(\Omega)}\|\kappa(K,g)\|_{\mathcal{N}_K(\Omega)}.
		\end{eqnarray*}
		\item For any $f\in\mathcal{N}_{K_\mathcal{G}}(\Omega)$, we have $f\in\mathcal{N}_K(\Omega)$ with
		\begin{eqnarray}
		\|f\|_{\mathcal{N}_K(\Omega)}\leq C_2 \|f\|_{\mathcal{N}_{K_\mathcal{G}}(\Omega)},
		\end{eqnarray}
		where
		\begin{eqnarray*}
			C_2=1+\sup_{\substack{g\in\mathcal{G}\\ \|g\|_{L_2(\Omega)}=1}}\|g\|_{\mathcal{N}_K(\Omega)}\left(\int_\Omega K(x,x)d x\right)^{1/2},\\
		\end{eqnarray*}
	\end{enumerate}
\end{theorem}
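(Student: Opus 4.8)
The plan is to exhibit $\mathcal{N}_{K_\mathcal{G}}(\Omega)$ as an isometric copy of a subspace of $\mathcal{N}_K(\Omega)$ via an explicit feature map, so that both inequalities follow from the reproducing property together with the two facts furnished by Lemma~\ref{th_embedding}: for every $g\in L_2(\Omega)$ one has $\kappa(K,g)\in\mathcal{N}_K(\Omega)$, with the adjoint relation $\langle F,\kappa(K,g)\rangle_{\mathcal{N}_K(\Omega)}=\langle F,g\rangle_{L_2(\Omega)}$ for all $F\in\mathcal{N}_K(\Omega)$; and the continuous embedding $\|v\|_{L_2(\Omega)}\le(\int_\Omega K(x,x)\,dx)^{1/2}\|v\|_{\mathcal{N}_K(\Omega)}$. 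Fix an orthonormal basis $\{e_i\}_{i=1}^m$ of $\mathcal{G}$, set $\phi_i:=\kappa(K,e_i)\in\mathcal{N}_K(\Omega)$, and define the candidate feature map $g_t:=K(\cdot,t)-\sum_{i=1}^m e_i(t)\phi_i\in\mathcal{N}_K(\Omega)$ for $t\in\Omega$.

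First I would verify the key identity $\langle g_s,g_t\rangle_{\mathcal{N}_K(\Omega)}=K_\mathcal{G}(s,t)$. Expanding the inner product and using $\langle F,K(\cdot,s)\rangle_{\mathcal{N}_K(\Omega)}=F(s)$ together with the adjoint relation yields $\langle K(\cdot,s),\phi_i\rangle_{\mathcal{N}_K(\Omega)}=\phi_i(s)$ and $\langle\phi_i,\phi_j\rangle_{\mathcal{N}_K(\Omega)}=\langle e_i,\phi_j\rangle_{L_2(\Omega)}=\int_{\Omega\times\Omega}K(x,y)e_i(x)e_j(y)\,dx\,dy$, after which the four resulting terms match the four terms of (\ref{Kg}) verbatim. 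Hence $t\mapsto g_t$ is a feature map for $K_\mathcal{G}$ valued in $\mathcal{N}_K(\Omega)$, and the standard feature-space description of a reproducing kernel Hilbert space gives $\mathcal{N}_{K_\mathcal{G}}(\Omega)=\{\,s\mapsto\langle v,g_s\rangle_{\mathcal{N}_K(\Omega)}:v\in\mathcal{N}_K(\Omega)\,\}$, with $\|h\|_{\mathcal{N}_{K_\mathcal{G}}(\Omega)}$ equal to the minimum of $\|v\|_{\mathcal{N}_K(\Omega)}$ over all $v$ representing $h$ in this way (the minimizer being the component of $v$ in the closed linear span of $\{g_t:t\in\Omega\}$).

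For (i), the adjoint relation gives, for $f\in\mathcal{N}_K(\Omega)$, $\langle f,g_s\rangle_{\mathcal{N}_K(\Omega)}=f(s)-\sum_{i=1}^m e_i(s)\langle f,e_i\rangle_{L_2(\Omega)}=(\mathcal{P}_\mathcal{G}^\perp f)(s)$. Thus $\mathcal{P}_\mathcal{G}^\perp f$ is precisely the feature-space function attached to $v=f$, so $\mathcal{P}_\mathcal{G}^\perp f\in\mathcal{N}_{K_\mathcal{G}}(\Omega)$ and $\|\mathcal{P}_\mathcal{G}^\perp f\|_{\mathcal{N}_{K_\mathcal{G}}(\Omega)}\le\|f\|_{\mathcal{N}_K(\Omega)}\le C_1\|f\|_{\mathcal{N}_K(\Omega)}$; in fact this route already yields (\ref{normineq}) with the constant $1$. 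For (ii), given $f\in\mathcal{N}_{K_\mathcal{G}}(\Omega)$ I choose the norm-minimizing representative $v\in\mathcal{N}_K(\Omega)$, so that $f(s)=\langle v,g_s\rangle_{\mathcal{N}_K(\Omega)}$ and $\|v\|_{\mathcal{N}_K(\Omega)}=\|f\|_{\mathcal{N}_{K_\mathcal{G}}(\Omega)}$. The computation just made shows $f=\mathcal{P}_\mathcal{G}^\perp v=v-\mathcal{P}_\mathcal{G}v$, which lies in $\mathcal{N}_K(\Omega)$ because $\mathcal{P}_\mathcal{G}v\in\mathcal{G}\subset\mathcal{N}_K(\Omega)$. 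The triangle inequality then gives $\|f\|_{\mathcal{N}_K(\Omega)}\le\|v\|_{\mathcal{N}_K(\Omega)}+\|\mathcal{P}_\mathcal{G}v\|_{\mathcal{N}_K(\Omega)}$, and I bound the correction by $\|\mathcal{P}_\mathcal{G}v\|_{\mathcal{N}_K(\Omega)}\le(\sup_{g\in\mathcal{G},\,\|g\|_{L_2(\Omega)}=1}\|g\|_{\mathcal{N}_K(\Omega)})\,\|\mathcal{P}_\mathcal{G}v\|_{L_2(\Omega)}$, using contractivity of the $L_2(\Omega)$ projection, $\|\mathcal{P}_\mathcal{G}v\|_{L_2(\Omega)}\le\|v\|_{L_2(\Omega)}$, and the embedding $\|v\|_{L_2(\Omega)}\le(\int_\Omega K(x,x)\,dx)^{1/2}\|v\|_{\mathcal{N}_K(\Omega)}$. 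Collecting the factors reproduces exactly the constant $C_2$.

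The one genuinely delicate point is the move from the algebraic identity $\langle g_s,g_t\rangle_{\mathcal{N}_K(\Omega)}=K_\mathcal{G}(s,t)$ to the norm description of $\mathcal{N}_{K_\mathcal{G}}(\Omega)$, namely that every native-space element is represented as $\langle v,g_\cdot\rangle_{\mathcal{N}_K(\Omega)}$ with a minimizing $v$ realizing the native norm, rather than merely on a dense subset. This hinges on $\{g_t\}$ genuinely lying in the Hilbert space $\mathcal{N}_K(\Omega)$ and on the adjoint relation of Lemma~\ref{th_embedding}, which itself requires interchanging the $\mathcal{N}_K(\Omega)$ inner product with the integral defining $\kappa$; since $\Omega$ is compact and $K$ continuous, $\int_\Omega|g(y)|\sqrt{K(y,y)}\,dy<\infty$, so the integral is a Bochner integral in $\mathcal{N}_K(\Omega)$ and the interchange is legitimate. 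Everything else is the term-by-term matching in the key identity and two applications of the triangle inequality.
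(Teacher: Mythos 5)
Your proof is correct, but it takes a genuinely different route from the paper's. The paper works on the dense subspace $\kappa(K,L_2(\Omega))\subset\mathcal{N}_K(\Omega)$: it establishes the identity $\|\mathcal{P}_\mathcal{G}^\perp\kappa(K,h)\|^2_{\mathcal{N}_{K_\mathcal{G}}(\Omega)}=\|\kappa(K,\mathcal{P}_\mathcal{G}^\perp h)\|^2_{\mathcal{N}_K(\Omega)}$ via Lemmas \ref{th_kg} and \ref{th_embedding}, bounds the cross term $\|\kappa(K,\mathcal{P}_\mathcal{G}h)\|_{\mathcal{N}_K(\Omega)}$ by Lemma \ref{th_projection} (which rests on the Mercer expansion of Lemma \ref{Thmercer}), and then reaches general $f$ by a Cauchy-sequence/continuous-extension argument, treating part (ii) symmetrically on the range of $\kappa(K_\mathcal{G},\cdot)$. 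You instead exhibit the explicit feature map $g_t=K(\cdot,t)-\sum_{i=1}^m e_i(t)\kappa(K,e_i)$, verify $\langle g_s,g_t\rangle_{\mathcal{N}_K(\Omega)}=K_\mathcal{G}(s,t)$, and invoke the feature-space characterization of an RKHS (together, implicitly, with Moore--Aronszajn uniqueness to identify that RKHS with the native space $\mathcal{N}_{K_\mathcal{G}}(\Omega)$ defined by completion). This buys three things: the density/extension step disappears, since the characterization covers every element of $\mathcal{N}_{K_\mathcal{G}}(\Omega)$ at once; Mercer's theorem and Lemma \ref{th_projection} are never needed, so nothing beyond Lemma \ref{th_embedding} is used about $K$; and part (i) comes out with the sharp constant $1$ rather than $C_1$, because $f\mapsto\langle f,g_\cdot\rangle_{\mathcal{N}_K(\Omega)}=\mathcal{P}_\mathcal{G}^\perp f$ is a norm-nonincreasing (indeed co-isometric) surjection onto $\mathcal{N}_{K_\mathcal{G}}(\Omega)$ --- a strengthening consistent with, and stronger than, the stated bound. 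Your part (ii) then uses essentially the same elementary estimate as the paper's (Cauchy--Schwarz, $L_2$-contractivity of $\mathcal{P}_\mathcal{G}$, and the embedding constant $(\int_\Omega K(x,x)\,dx)^{1/2}$), recovering exactly $C_2$. The cost is reliance on the feature-map theorem, which is standard but external to the native-space toolkit the paper confines itself to; you should cite it precisely (e.g., Steinwart and Christmann, \emph{Support Vector Machines}, Theorem 4.21), and note that it includes attainment of the minimal-norm representative, which is precisely the point your argument for (ii) needs.
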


\begin{rem}
	Given $K$ and $\mathcal{G}$, $C_1$ and $C_2$ are finite because $\mathcal{G}$ is a finite dimensional space and all norms over a finite dimensional space are equivalent.
\end{rem}

\begin{rem}
	Using Lemma \ref{Thmercer} in Section \ref{sec_lemma} it is easily verified that
	\begin{eqnarray}
	\|\kappa(K,g)\|_{\mathcal{N}_K(\Omega)}\leq \rho_{max}\|g\|_{\mathcal{N}_K(\Omega)}, \label{rhomax}
	\end{eqnarray}
	where $\rho_{max}$ is the greatest eigenvalue of $\kappa(K,\cdot)$.
\end{rem}

Corollary \ref{coro_directsum} is an immediate consequence of Theorem \ref{th_norminequality} together with Lemma \ref{th_embedding} in Section \ref{sec_lemma}.

\begin{corollary}\label{coro_directsum}
	As linear spaces, 
	$\mathcal{N}_K(\Omega) = \mathcal{N}_{K_\mathcal{G}}(\Omega)\oplus\mathcal{G}$.
\end{corollary}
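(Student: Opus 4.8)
The plan is to obtain the decomposition by verifying two facts separately: that the two summands together span $\mathcal{N}_K(\Omega)$, and that they meet only at the origin; the direct-sum assertion $\mathcal{N}_K(\Omega)=\mathcal{N}_{K_\mathcal{G}}(\Omega)\oplus\mathcal{G}$ is exactly the conjunction of these. Throughout I would treat each element of either native space as a genuine function in $L_2(\Omega)$, which is legitimate because Lemma \ref{th_embedding} tells us that both $\mathcal{N}_K(\Omega)$ and $\mathcal{N}_{K_\mathcal{G}}(\Omega)$ embed continuously into $L_2(\Omega)$. In particular, convergence in either native norm forces convergence in $L_2(\Omega)$, a fact I will need when passing to closures.

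For the spanning part I would argue both inclusions. The inclusion $\mathcal{N}_{K_\mathcal{G}}(\Omega)+\mathcal{G}\subset\mathcal{N}_K(\Omega)$ is immediate: part (ii) of Theorem \ref{th_norminequality} gives $\mathcal{N}_{K_\mathcal{G}}(\Omega)\subset\mathcal{N}_K(\Omega)$, and $\mathcal{G}\subset\mathcal{N}_K(\Omega)$ is the standing assumption. For the reverse inclusion, take any $f\in\mathcal{N}_K(\Omega)$ and write the $L_2$-orthogonal splitting $f=\mathcal{P}_\mathcal{G}^\perp f+\mathcal{P}_\mathcal{G} f$. Part (i) of Theorem \ref{th_norminequality} shows $\mathcal{P}_\mathcal{G}^\perp f\in\mathcal{N}_{K_\mathcal{G}}(\Omega)$, while $\mathcal{P}_\mathcal{G} f\in\mathcal{G}$ by definition, so $f\in\mathcal{N}_{K_\mathcal{G}}(\Omega)+\mathcal{G}$. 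Combining the two inclusions yields $\mathcal{N}_K(\Omega)=\mathcal{N}_{K_\mathcal{G}}(\Omega)+\mathcal{G}$.

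The substantive step, and the one I expect to be the main obstacle, is directness, i.e. $\mathcal{N}_{K_\mathcal{G}}(\Omega)\cap\mathcal{G}=\{0\}$. My plan is to prove the stronger statement $\mathcal{N}_{K_\mathcal{G}}(\Omega)\subset\mathcal{G}^\perp$, which settles directness at once since $\mathcal{G}\cap\mathcal{G}^\perp=\{0\}$ in $L_2(\Omega)$. The key observation is the factorization $K_\mathcal{G}=(I-\mathcal{P}_\mathcal{G}^{(1)})(I-\mathcal{P}_\mathcal{G}^{(2)})K$, which follows by expanding the product and matching it with the defining identity (\ref{Def1}). Because $I-\mathcal{P}_\mathcal{G}^{(1)}$ removes the $\mathcal{G}$-component in the first argument, each section $K_\mathcal{G}(\cdot,t)$ is $L_2$-orthogonal to $\mathcal{G}$; equivalently one checks directly from (\ref{Kg}) that $\langle K_\mathcal{G}(\cdot,t),e_k\rangle_{L_2(\Omega)}=0$ for every basis element $e_k$ of $\mathcal{G}$, the cross terms cancelling in pairs. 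Hence every generator of $F_{K_\mathcal{G}}(\Omega)$ lies in the closed subspace $\mathcal{G}^\perp$, so $F_{K_\mathcal{G}}(\Omega)\subset\mathcal{G}^\perp$.

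Finally I would pass from the generating set to the whole space. Given $f\in\mathcal{N}_{K_\mathcal{G}}(\Omega)$, choose $f_n\in F_{K_\mathcal{G}}(\Omega)$ with $\|f_n-f\|_{\mathcal{N}_{K_\mathcal{G}}(\Omega)}\to0$; the continuous embedding from Lemma \ref{th_embedding} gives $\|f_n-f\|_{L_2(\Omega)}\to0$, and since each $f_n\in\mathcal{G}^\perp$ while $\mathcal{G}^\perp$ is $L_2$-closed, the limit satisfies $f\in\mathcal{G}^\perp$. This proves $\mathcal{N}_{K_\mathcal{G}}(\Omega)\subset\mathcal{G}^\perp$ and hence the trivial intersection. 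The only delicate point is this closure step: one must control native-norm limits in $L_2(\Omega)$, which is precisely what the embedding lemma supplies, so no additional regularity input is required.
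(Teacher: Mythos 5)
Your proposal is correct, and its first half coincides with what the paper intends: the paper dismisses the corollary as an ``immediate consequence of Theorem \ref{th_norminequality} together with Lemma \ref{th_embedding}'', and indeed parts (i) and (ii) of Theorem \ref{th_norminequality}, plus the standing assumption $\mathcal{G}\subset\mathcal{N}_K(\Omega)$, give exactly your spanning argument via the splitting $f=\mathcal{P}_\mathcal{G}^\perp f+\mathcal{P}_\mathcal{G}f$. Where you diverge is the trivial-intersection step. You prove $\mathcal{N}_{K_\mathcal{G}}(\Omega)\subset\mathcal{G}^\perp$ by hand: you check from (\ref{Kg}) that every section $K_\mathcal{G}(\cdot,t)$ is $L_2$-orthogonal to $\mathcal{G}$ (your cancellation computation is right), so $F_{K_\mathcal{G}}(\Omega)\subset\mathcal{G}^\perp$, and then you pass to the closure using the continuous embedding of Lemma \ref{th_embedding} and the $L_2$-closedness of $\mathcal{G}^\perp$. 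The route the paper's citation pattern suggests is shorter and avoids the density argument entirely: for any $f\in\mathcal{N}_{K_\mathcal{G}}(\Omega)$ and $g\in\mathcal{G}$, identity (\ref{lemmaieq2}) of Lemma \ref{th_embedding} applied with $\Phi=K_\mathcal{G}$ gives
\begin{equation*}
\langle f,g\rangle_{L_2(\Omega)}=\bigl\langle f,\kappa(K_\mathcal{G},g)\bigr\rangle_{\mathcal{N}_{K_\mathcal{G}}(\Omega)}=0,
\end{equation*}
since $\kappa(K_\mathcal{G},g)=0$ by part (ii) of Lemma \ref{th_kg}; this yields $\mathcal{N}_{K_\mathcal{G}}(\Omega)\subset\mathcal{G}^\perp$ in one line, which is also the fact the paper quotes without proof at the start of Section \ref{sec_PKC}. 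What your version buys is self-containedness --- it needs nothing beyond the defining formula of $K_\mathcal{G}$ and the embedding, not the operator calculus of Lemma \ref{th_kg} --- at the cost of the extra closure step, whose only delicate point (native-norm limits controlling $L_2$ limits) you correctly identify and discharge. Both arguments are sound.
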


\section{Projected Kernel Calibration}\label{sec_PKC}
%
%
%

In this section we introduce the proposed projected kernel calibration method and discuss its properties.

\subsection{Methodology}

In Section \ref{sec_nativespace}, we introduce the native space $\mathcal{N}_{K_\mathcal{G}}(\Omega)$, which is a subset of $\mathcal{G}^\perp$, the $L_2(\Omega)$-orthogonal complement of $\mathcal{G}$. The orthogonality plays an important role in estimating the $L_2$ projection $\theta^*$ defined in (\ref{l2projection}). To see this, we suppose $\theta^*$ is an interior point of $\Theta$. Then, we differentiate the right hand side of (\ref{l2projection}) and use the optimality condition to obtain
\begin{eqnarray}
0 = \int_\Omega \frac{\partial y^s}{\partial \theta}(x,\theta^*)(\zeta(x)-y^s(x,\theta^*)) d x.\label{orthogonality}
\end{eqnarray}
For $\theta=(\theta_1,\ldots,\theta_q)\in\Theta\subset \mathbb{R}^q$, define
\begin{eqnarray}
\mathcal{G}_\theta = \operatorname{span}\left\{\frac{\partial y^s}{\partial\theta_i}(\cdot,\theta):i=1,\ldots,q\right\}.
\end{eqnarray}
Then (\ref{orthogonality}) implies that $\zeta(\cdot)-y^s(\cdot,\theta^*)$ is orthogonal to $\mathcal{G}_{\theta^*}$ in $L_2(\Omega)$.

As suggested by \cite{tuo2014efficient}, the goal of calibration is to estimate $\theta^*$. Let $\hat{\theta}$ be an estimator of $\theta^*$. In view of (\ref{orthogonality}), it is natural to impose the following analogous orthogonality requirement:
\begin{eqnarray}
(\hat{\zeta}(\cdot)-y^s(\cdot,\hat{\theta})) \perp \mathcal{G}_{\hat{\theta}}, \label{perp}
\end{eqnarray}
where $\hat{\zeta}$ is an estimate of $\zeta$. Recall that \cite{tuo2014efficient} suggest to estimate $\zeta$ via smoothing in a native space, denoted by $\mathcal{N}_K(\Omega)$. In this work, we use a similar idea and suppose that $\hat{\zeta}$ and $y^s(\cdot,\hat{\theta})$ lie in $\mathcal{N}_K(\Omega)$. Now in view of (\ref{perp}) and Corollary \ref{coro_directsum}, we find that $\hat{\delta}=\hat{\zeta}-y^s(\cdot,\hat{\theta})\in \mathcal{N}_{K_\mathcal{G}}(\Omega)$. This inspires us to introduce the projected kernel smoothing estimator $(\hat{\theta}_{PK},\hat{\delta}_{PK})$ as the minimizer of
\begin{eqnarray}
& &  \min_{\theta\in\Theta,\delta\in\mathcal{N}_{K_{\mathcal{G}_\theta}}(\Omega)} \frac{1}{n}\sum_{i=1}^n (y_i^p-\delta(x_i)-y^s(x_i,\theta))^2+\lambda\|\delta\|^2_{\mathcal{N}_{K_{\mathcal{G}_\theta}}(\Omega)}\nonumber \\
&=&  \min_{\theta\in\Theta}\min_{\delta\in\mathcal{N}_{K_{\mathcal{G}_\theta}}(\Omega)}\frac{1}{n}\sum_{i=1}^n (y_i^p-\delta(x_i)-y^s(x_i,\theta))^2+\lambda\|\delta\|^2_{\mathcal{N}_{K_{\mathcal{G}_\theta}}(\Omega)},\label{PKC}
\end{eqnarray}
where $\lambda$ is a tuning parameter. We suggest choosing $\lambda$ by generalized cross validation (GCV); see \cite{wahba1990spline}. Although (\ref{PKC}) is formulated as an infinite-dimensional minimization problem, the inner minimization problem can be solved analytically with the help of the representer theorem \cite{scholkopf2001generalized,wahba1990spline}; see Section \ref{sec_Bayesian} for details.

%

\subsection{Asymptotic Theory}\label{sec_asymptotic}

In this section, we investigate the asymptotic properties of the proposed projected kernel calibration estimators. We will show that under certain conditions these estimators attain the semi-parametric efficiency mentioned in Section \ref{sec_l2}. To be precise, in this section we use the notation $\hat{\theta}_n,\hat{\delta}_n$ instead of $\hat{\theta}_{PK},\hat{\delta}_{PK}$ defined in (\ref{PKC}), where $n$ is the physical sample size. The tuning parameter $\lambda$ in (\ref{PKC}) should also vary with $n$, denoted as $\lambda_n$. As suggested in \cite{tuo2014efficient}, we assume that the design points $x_i$'s are independent samples from the uniform distribution over $\Omega$.

In the first asymptotic result, we are concerned with the predictive power of the proposed method, in terms of estimating the true process $\zeta(\cdot)$. Let $\hat{\zeta}_n(\cdot)=\hat{\delta}_n(\cdot)+y^s(\cdot,\hat{\theta}_n)$, which is a natural estimator of $\zeta(\cdot)$. 

\begin{theorem}\label{th_prediction}
	Suppose $x_i$'s are independent random samples from the uniform distribution over $\Omega$ and $\mathcal{N}_K(\Omega)$ can be continuously embedded into the Sobolev space $H^m(\Omega)$ with $m>d/2$. There exists $C_0>0$ such that
	\begin{eqnarray}\label{subexponential}
	E[\exp(C_0|e_i|)]<\infty,
	\end{eqnarray}
	Moreover, we assume the following uniform boundedness conditions:
	\begin{eqnarray}
	C_3&:=&\sup_{\theta\in\Theta,i=1,\ldots,q}\left\{\left\|\frac{\partial y^s}{\partial \theta_i}(\cdot,\theta)\right\|_{\mathcal{N}_K(\Omega)}/\left\|\frac{\partial y^s}{\partial \theta_i}(\cdot,\theta)\right\|_{L_2(\Omega)}\right\}<\infty,\label{supbound}\\
	C'_3&:=&\sup_{\theta\in\Theta}\|y^s(\cdot,\theta)\|_{\mathcal{N}_K(\Omega)}<\infty.\label{supbound2}
	\end{eqnarray}
	Then if $\lambda_n \sim n^{-\frac{2 m}{2m+d}}$, we have
	\begin{eqnarray}
	\|\zeta-\hat{\zeta}_n\|_{L_2(\Omega)} &=& O_p(n^{-\frac{m}{2m+d}}),\label{L2rate} \\
	\|\hat{\zeta}_n\|_{\mathcal{N}_K(\Omega)} &=& O_p(1). \label{Hmrate}
	\end{eqnarray}
\end{theorem}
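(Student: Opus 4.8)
The plan is to treat (\ref{PKC}) as a penalized least-squares problem and run a standard ``basic inequality plus empirical process'' rate argument, with the one nonstandard feature that the penalty space $\mathcal{N}_{K_{\mathcal{G}_\theta}}(\Omega)$ varies with $\theta$. First I would record a basic inequality. Writing $y_i^p = \zeta(x_i) + e_i$ and abbreviating $\hat\zeta_n = \hat\delta_n + y^s(\cdot,\hat\theta_n)$, I compare the objective value at the minimizer $(\hat\theta_n,\hat\delta_n)$ with its value at the feasible candidate $(\theta^*,\delta^*)$, where $\delta^* = \zeta - y^s(\cdot,\theta^*)$. By the optimality condition (\ref{orthogonality}), $\delta^*$ is $L_2(\Omega)$-orthogonal to $\mathcal{G}_{\theta^*}$, so $\mathcal{P}^\perp_{\mathcal{G}_{\theta^*}}\delta^* = \delta^*$, and Theorem \ref{th_norminequality}(i) guarantees $\delta^* \in \mathcal{N}_{K_{\mathcal{G}_{\theta^*}}}(\Omega)$ (using the standing assumption $\zeta \in \mathcal{N}_K(\Omega)$ and (\ref{supbound2})). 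Since this candidate fits the true process exactly at the design points, the comparison yields
\begin{equation*}
\frac{1}{n}\sum_{i=1}^n (\zeta(x_i) - \hat\zeta_n(x_i))^2 + \lambda_n \|\hat\delta_n\|^2_{\mathcal{N}_{K_{\mathcal{G}_{\hat\theta_n}}}(\Omega)} \leq \lambda_n \|\delta^*\|^2_{\mathcal{N}_{K_{\mathcal{G}_{\theta^*}}}(\Omega)} - \frac{2}{n}\sum_{i=1}^n e_i\,(\zeta(x_i) - \hat\zeta_n(x_i)).
\end{equation*}

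Second, I would decouple the $\theta$-dependence. The estimation error $h := \zeta - \hat\zeta_n$ lies in the fixed space $\mathcal{N}_K(\Omega)$, and Theorem \ref{th_norminequality}(ii) with the triangle inequality bounds $\|\hat\zeta_n\|_{\mathcal{N}_K(\Omega)}$ by $C_2\|\hat\delta_n\|_{\mathcal{N}_{K_{\mathcal{G}_{\hat\theta_n}}}(\Omega)} + \|y^s(\cdot,\hat\theta_n)\|_{\mathcal{N}_K(\Omega)}$. The role of the uniform bounds (\ref{supbound}) and (\ref{supbound2}) is precisely to make the norm-equivalence constants $C_1,C_2$ of Theorem \ref{th_norminequality}, as well as $\|\delta^*\|_{\mathcal{N}_{K_{\mathcal{G}_{\theta^*}}}(\Omega)}$ and $\|y^s(\cdot,\theta)\|_{\mathcal{N}_K(\Omega)}$, bounded uniformly over $\theta \in \Theta$. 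This reduces every quantity in the basic inequality to a statement about a single, $\theta$-free ball in $\mathcal{N}_K(\Omega)$, so the changing penalty space no longer obstructs the analysis.

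Third, I would control the empirical cross term $\frac{1}{n}\sum e_i h(x_i)$ by a multiplier empirical process bound over the ball $\{h \in \mathcal{N}_K(\Omega) : \|h\|_{\mathcal{N}_K(\Omega)} \leq M\}$. Via the continuous embedding $\mathcal{N}_K(\Omega) \hookrightarrow H^m(\Omega)$ with $m > d/2$, this ball sits inside an $H^m$-ball, whose metric entropy in $L_2(\Omega)$ obeys the classical estimate $\log N(\epsilon) \leq C\epsilon^{-d/m}$. Combining this entropy bound with the sub-exponential tail (\ref{subexponential}) through a Bernstein-type maximal inequality gives, uniformly over the ball, a bound on $|\frac{1}{n}\sum e_i h(x_i)|$ in terms of the empirical norm $\|h\|_n$ and $M$. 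Feeding this into the basic inequality and balancing the penalty at $\lambda_n \sim n^{-2m/(2m+d)}$ yields $\|h\|_n^2 = O_p(n^{-2m/(2m+d)})$ and $\|\hat\delta_n\|^2_{\mathcal{N}_{K_{\mathcal{G}_{\hat\theta_n}}}(\Omega)} = O_p(1)$; the latter, with the second step, gives (\ref{Hmrate}). A separate uniform equivalence between $\|\cdot\|_n$ and $\|\cdot\|_{L_2(\Omega)}$ on the same smooth ball then converts the empirical rate into the $L_2$ rate (\ref{L2rate}).

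The hard part is the uniformity over $\theta \in \Theta$: because the penalty $\|\cdot\|_{\mathcal{N}_{K_{\mathcal{G}_\theta}}(\Omega)}$ and its very domain change with $\theta$, the standard single-RKHS rate theory does not apply off the shelf. The whole strategy hinges on using Theorem \ref{th_norminequality} and Corollary \ref{coro_directsum} to transport the problem into the fixed space $\mathcal{N}_K(\Omega)$ with $\theta$-uniform constants, which is exactly what assumptions (\ref{supbound}) and (\ref{supbound2}) are engineered to supply; verifying that these assumptions really do bound $C_1,C_2$ uniformly, which requires the spanning derivatives $\frac{\partial y^s}{\partial\theta_i}(\cdot,\theta)$ to stay uniformly well-conditioned in $L_2(\Omega)$, is the most delicate point. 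The remaining empirical-process estimates are then routine given the $H^m$ embedding and the sub-exponential errors.
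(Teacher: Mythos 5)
Your proposal is correct and follows essentially the same route as the paper: a basic inequality against the candidate $(\theta^*,\delta^*)$ with $\delta^*=\zeta-y^s(\cdot,\theta^*)$, a $\theta$-uniform two-sided norm equivalence between $\|\cdot\|_{\mathcal{N}_{K_{\mathcal{G}_\theta}}(\Omega)}$ and $\|\cdot\|_{\mathcal{N}_K(\Omega)}$ built from Theorem \ref{th_norminequality} together with (\ref{supbound})--(\ref{supbound2}), an entropy/modulus-of-continuity bound for the multiplier empirical process via the $H^m$ embedding and the sub-exponential errors (the paper invokes Theorems 5.11 and 10.2 of van de Geer), and finally the empirical-to-$L_2$ norm equivalence (the paper's Lemma 5.16 citation) to convert the rate. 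The point you flag as delicate --- uniformity of the constants over $\theta$ --- is exactly what the paper's inequality (\ref{uniformineq}) with explicit constants $C_4,C_5$ supplies.
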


The rate of convergence in (\ref{L2rate}) is known to be optimal in the current context; see \cite{stone1982optimal}.
Next, we turn to the calibration consistency, i.e., whether $\hat{\theta}_n$ converges to the $L_2$ calibration $\theta^*$.

\begin{theorem}\label{th_calibration}
	In addition to the assumptions of Theorem \ref{th_prediction}, we assume 
	that the matrix
	\begin{eqnarray}\label{Istar}
	V = \int_\Omega \frac{\partial^2}{\partial \theta^T\partial\theta} (\zeta(x)-y^s(x,\theta^*))^2 d x
	\end{eqnarray}
	is positive definite. Moreover, there exists a neighborhood of $\theta^*$, denoted as $\Theta'$, satisfying
	\begin{eqnarray}\label{supbound3}
	\sup_{\theta\in\Theta',1\leq i\leq q}\left\|\frac{\partial y^s}{\partial \theta_i}(\cdot,\theta)\right\|_{\mathcal{N}_{K_{\mathcal{G}_\theta}}(\Omega)}\leq \infty.
	\end{eqnarray}
	Then there exists a local minimum point of (\ref{PKC}), denoted as $(\hat{\theta}^*_n,\hat{\delta}_n^*)$, such that the sequence $\{\hat{\theta}^*_n\}$ converges to $\theta^*$ in probability as $n$ goes to infinity.
\end{theorem}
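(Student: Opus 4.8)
The plan is to reduce the joint minimization in (\ref{PKC}) to a profile problem in $\theta$ and then run the classical ``consistent local root'' argument around $\theta^*$. For each $\theta$ in a neighborhood of $\theta^*$, Theorem \ref{Thpositivedefinite} makes $K_{\mathcal{G}_\theta}$ positive definite, so the inner minimization over $\delta\in\mathcal{N}_{K_{\mathcal{G}_\theta}}(\Omega)$ is uniquely solvable by the representer theorem; write its value as $L_n(\theta)$, its solution as $\hat\delta_{n,\theta}$, and set $\hat\zeta_{n,\theta}:=\hat\delta_{n,\theta}+y^s(\cdot,\theta)$. Since $\mathcal{N}_{K_{\mathcal{G}_\theta}}(\Omega)\subset\mathcal{G}_\theta^\perp$, any local minimizer $\hat\theta_n^*$ of the profile $L_n$, paired with $\hat\delta_{n,\hat\theta_n^*}$, is a local minimizer of (\ref{PKC}) (for every nearby feasible $(\theta,\delta)$ the objective is at least $L_n(\theta)\geq L_n(\hat\theta_n^*)$), and it automatically satisfies the empirical orthogonality $\int_\Omega \frac{\partial y^s}{\partial\theta_i}(x,\hat\theta_n^*)(\hat\zeta_{n,\hat\theta_n^*}(x)-y^s(x,\hat\theta_n^*))\,dx=0$, the sample analogue of (\ref{orthogonality}). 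It therefore suffices to produce a local minimizer of $L_n$ inside a small ball about $\theta^*$.

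First I would identify the deterministic limit of the profile objective. By the RKHS-smoothing estimates underlying Theorem \ref{th_prediction}, now applied with $\theta$ held fixed, $\hat\zeta_{n,\theta}$ converges in $L_2(\Omega)$ to the best admissible approximant $y^s(\cdot,\theta)+\mathcal{P}_{\mathcal{G}_\theta}^\perp(\zeta-y^s(\cdot,\theta))$, so that
\begin{equation*}
L_n(\theta)=\sigma^2+L(\theta)+o_p(1),\qquad L(\theta):=\big\|\mathcal{P}_{\mathcal{G}_\theta}(\zeta-y^s(\cdot,\theta))\big\|_{L_2(\Omega)}^2,
\end{equation*}
where $\sigma^2=E[e_i^2]$. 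The orthogonality condition (\ref{orthogonality}) gives $L(\theta^*)=0$, and $L(\theta)=0$ holds exactly when $\zeta-y^s(\cdot,\theta)\perp\mathcal{G}_\theta$, i.e. when $\nabla_\theta\|\zeta-y^s(\cdot,\theta)\|_{L_2(\Omega)}^2=0$. This gradient vanishes at $\theta^*$ with derivative $V$, which is positive definite and hence nonsingular, so the inverse function theorem makes $\theta^*$ an isolated zero of $L$; by continuity of $L$ and compactness of the sphere, $c_\epsilon:=\min_{|\theta-\theta^*|=\epsilon}L(\theta)>0$ for all sufficiently small $\epsilon$.

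I would then run the boundary argument on a closed ball $\bar B(\theta^*,\epsilon)\subset\Theta'$. If the convergence $L_n\to\sigma^2+L$ is uniform on $\bar B(\theta^*,\epsilon)$, then with probability tending to one $L_n(\theta)-L_n(\theta^*)\geq c_\epsilon-o_p(1)>0$ for every $\theta$ on the sphere $|\theta-\theta^*|=\epsilon$, so the continuous function $L_n$ attains its minimum over the ball at an interior point $\hat\theta_n^*$, which is a local minimizer of (\ref{PKC}) lying within $\epsilon$ of $\theta^*$; a standard subsequence/diagonalization over $\epsilon\downarrow 0$ then yields a single sequence with $\hat\theta_n^*\to\theta^*$ in probability. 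The hard part will be exactly this uniform control, because the $\theta$-dependence enters not only through $y^s(\cdot,\theta)$ but, more delicately, through the moving native space $\mathcal{N}_{K_{\mathcal{G}_\theta}}(\Omega)$, so the stochastic part of $L_n(\theta)$ is a penalized smoothing of the errors $e_i$ by a kernel that varies with $\theta$. I expect to handle this by an empirical-process/uniform-law-of-large-numbers argument: Theorem \ref{th_norminequality}, together with the boundedness conditions (\ref{supbound})--(\ref{supbound3}), keeps the constants relating $\|\cdot\|_{\mathcal{N}_{K_{\mathcal{G}_\theta}}(\Omega)}$ and $\|\cdot\|_{\mathcal{N}_K(\Omega)}$ uniform over $\theta\in\Theta'$, while the continuous embedding $\mathcal{N}_K(\Omega)\hookrightarrow H^m(\Omega)$ with $m>d/2$ supplies the metric-entropy bounds needed to upgrade pointwise to uniform convergence, the sub-exponential tail (\ref{subexponential}) controlling the residual error terms.
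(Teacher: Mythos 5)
Your proposal is correct in outline, but it takes a genuinely different route from the paper. The paper never profiles out $\delta$ or proves uniform convergence on a fixed sphere; it works instead on a sphere of \emph{shrinking} radius $cn^{-\frac{m}{2m+d}}$ and argues by contradiction: if the value of (\ref{PKC}) at $(\theta^*,\delta^*)$ with $\delta^*=\zeta-y^s(\cdot,\theta^*)$ exceeded the infimum over that sphere, the basic inequality plus the empirical-process modulus bound (\ref{modulus}) would force the residual $\tilde{\delta}+y^s(\cdot,\tilde{\theta})-\zeta$ to have empirical norm at most $K_0 n^{-\frac{m}{2m+d}}$, while the geometric lower bound (\ref{normlowerbd}) --- which is exactly your $L(\theta)=\|\mathcal{P}_{\mathcal{G}_\theta}(\zeta-y^s(\cdot,\theta))\|^2_{L_2(\Omega)}$, bounded below by $c_2^2\|\theta-\theta^*\|^2$ via the positive definiteness of $V$ --- combined with the norm equivalence (\ref{equivalence}) forces that same empirical norm to be at least $(1+o_p(1))2K_0 n^{-\frac{m}{2m+d}}$, a contradiction. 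So both proofs rest on the same two pillars (the identification of the profiled $L_2$ bias as $\|\mathcal{P}_{\mathcal{G}_\theta}(\zeta-y^s(\cdot,\theta))\|_{L_2(\Omega)}$, and van de Geer-type empirical process control), but the paper's shrinking-ball version buys a localization rate $\|\hat{\theta}^*_n-\theta^*\|=O(n^{-\frac{m}{2m+d}})$ for free, whereas your fixed-ball Wald-type argument yields only consistency and needs the extra diagonalization over $\epsilon$; in exchange, yours is more modular and does not hinge on comparison with the special point $(\theta^*,\delta^*)$. (Minor point: the inverse function theorem is heavier than needed --- differentiability of $\theta\mapsto\nabla_\theta\|\zeta-y^s(\cdot,\theta)\|^2_{L_2(\Omega)}$ at $\theta^*$ with nonsingular derivative $V$ already makes $\theta^*$ an isolated zero, which is all the paper uses.)

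The one place where your sketch still owes real work is the uniform lower bound $\inf_{\|\theta-\theta^*\|=\epsilon}L_n(\theta)\geq\sigma^2+c_\epsilon-o_p(1)$. The infimum defining $L_n(\theta)$ runs over all $\delta\in\mathcal{N}_{K_{\mathcal{G}_\theta}}(\Omega)$ with no a priori bound on $\|\delta\|_{\mathcal{N}_{K_{\mathcal{G}_\theta}}(\Omega)}$, and since $\lambda_n\to 0$ the penalty does not confine $\delta$ to a fixed ball; the entropy/Donsker tools you invoke apply to norm balls. You therefore need a peeling (or truncation) argument over shells $\|\delta\|_{\mathcal{N}_K(\Omega)}\in[2^j,2^{j+1})$, exactly in the spirit of Theorem 10.2 of \cite{vandegeer2000empirical} that the paper leans on: on shells up to $\eta n^{\frac{m}{2m+d}}$, a rescaled version of (\ref{equivalence}) transfers $\|\zeta-y^s(\cdot,\theta)-\delta\|_{L_2(\Omega)}\geq\sqrt{c_\epsilon}$ to the empirical norm while (\ref{modulus}) keeps the cross term $o_p(1)$; on larger shells the penalty $\lambda_n\|\delta\|^2\gtrsim\eta^2$ dominates the (possibly negative) quadratic-plus-cross contribution. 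With that step supplied, your argument closes under the stated hypotheses.
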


Although the nonparametric estimator $\hat{\zeta}_n$ converges at a rate lower than $O(n^{-1/2})$ as in (\ref{L2rate}), it is shown by Theorem \ref{th_normality} that the consistent estimator $\hat{\theta}^*_n$ in Theorem \ref{th_calibration} has $O(n^{-1/2})$ rate of convergence. Define matrix
\begin{eqnarray*}
	D_\theta= \left(\left\langle\frac{\partial y^s(\cdot,\theta)}{\partial\theta_i},\frac{\partial y^s(\cdot,\theta)}{\partial\theta_j}\right\rangle_{L_2(\Omega)}\right)_{i j}.
\end{eqnarray*}
Let $\lambda_{min}(D_\theta)$ be the minimum eigenvalue of $D_\theta$.

\begin{theorem}\label{th_normality}
	Under the conditions of Theorem \ref{th_calibration}, let $\{(\hat{\theta}^*_n,\hat{\delta}_n^*)\}$ be a sequence of local minimum points such that $\hat{\theta}_n^*$ converges to $\theta^*$ in probability as shown in Theorem \ref{th_calibration}. In addition, we suppose
	\begin{eqnarray}
	&&\sup_{1\leq i,j\leq q,\theta\in\Theta'} \left\|\frac{\partial^2 y^s}{\partial \theta_i\partial\theta_j}(\cdot,\theta)\right\|_{\mathcal{N}_{K_{\mathcal{G}_\theta}}(\Omega)}\leq \infty, \label{supbound4} \\
	&&\inf_{\theta\in\Theta'}\lambda_{min}(D_\theta) > 0.\label{eigenbound}
	\end{eqnarray}
	Then we have
	\begin{eqnarray}\label{efficiency}
	\hat{\theta}_n^*-\theta^*=-2 V^{-1}\left\{\frac{1}{n}\sum_{i=1}^n e_i \frac{\partial y^s}{\partial \theta}(x_i,\theta^*)\right\}+o_p(n^{-1/2}).
	\end{eqnarray}
\end{theorem}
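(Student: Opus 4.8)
I would read Theorem~\ref{th_normality} as a semi-parametric M-estimation statement and aim to show that the profiled estimating equation for $\theta$ agrees, up to $o_p(n^{-1/2})$, with the efficient score that already drives the $L_2$ calibration in (\ref{benchmark}); the orthogonality encoded in the projected kernel will play the role of a Neyman-type orthogonality that neutralises the first-order nuisance bias. Write $\frac{\partial y^s}{\partial\theta}(\cdot,\theta)$ for the score direction, $\delta_0=\zeta-y^s(\cdot,\theta^*)$ for the true discrepancy (which satisfies $\delta_0\in\mathcal{G}_{\theta^*}^\perp$ by (\ref{orthogonality})), and $\hat\zeta_n=\hat\delta_n^*+y^s(\cdot,\hat\theta_n^*)$. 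Throughout I would use the consistency $\hat\theta_n^*\to_p\theta^*$ from Theorem~\ref{th_calibration} and the rates $\|\zeta-\hat\zeta_n\|_{L_2(\Omega)}=O_p(n^{-m/(2m+d)})$, $\|\hat\zeta_n\|_{\mathcal{N}_K(\Omega)}=O_p(1)$ from Theorem~\ref{th_prediction}.

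The first task is to extract the first-order condition in $\theta$. Because the feasible set $\mathcal{N}_{K_{\mathcal{G}_\theta}}(\Omega)$ varies with $\theta$, I would profile out $\delta$ (reducing the inner problem to finite dimensions through the representer theorem, as in Section~\ref{sec_Bayesian}) and differentiate the profiled objective. The explicit $\theta$-dependence splits into the least-squares part, whose gradient is $-\tfrac2n\sum_i (y_i^p-\hat\zeta_n(x_i))\frac{\partial y^s}{\partial\theta}(x_i,\hat\theta_n^*)$, and the penalty plus moving-space part. The latter I would bound by $o_p(n^{-1/2})$ using $\lambda_n\sim n^{-2m/(2m+d)}=o(n^{-1/2})$ (which is exactly where $m>d/2$ is used), the equivalence of the native norms of $\mathcal{N}_{K_{\mathcal{G}_\theta}}(\Omega)$ and $\mathcal{N}_K(\Omega)$ uniformly in $\theta\in\Theta'$ from Theorem~\ref{th_norminequality}, the uniform norm bounds (\ref{supbound3})--(\ref{supbound4}), and the eigenvalue bound (\ref{eigenbound}), which keeps the projection onto $\mathcal{G}_\theta$ uniformly well-conditioned. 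This leaves the estimating equation $\frac1n\sum_i (y_i^p-\hat\zeta_n(x_i))\frac{\partial y^s}{\partial\theta}(x_i,\hat\theta_n^*)=o_p(n^{-1/2})$.

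Next I would expand this equation about $(\theta^*,\delta_0)$. Evaluated at the truth it is $\frac1n\sum_i e_i\frac{\partial y^s}{\partial\theta}(x_i,\theta^*)$, since $y_i^p-\delta_0(x_i)-y^s(x_i,\theta^*)=e_i$; this supplies the leading stochastic term. The increment in $\theta$ produces a Hessian factor: expanding $G(\theta):=\int_\Omega(\zeta-y^s(\cdot,\theta))\frac{\partial y^s}{\partial\theta}(\cdot,\theta)\,dx$ about $\theta^*$ and using $G(\theta^*)=0$ (again (\ref{orthogonality})) gives $\nabla G(\theta^*)=-\tfrac12 V$, with $V$ the positive-definite matrix in (\ref{Istar}). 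The remaining, and genuinely delicate, piece is the nuisance increment coming from $\hat\zeta_n-\zeta$ evaluated at the design points. Here I would pass from the in-sample averages to $L_2(\Omega)$ integrals; the projected-kernel orthogonality $\int_\Omega(\hat\zeta_n-y^s(\cdot,\hat\theta_n^*))\frac{\partial y^s}{\partial\theta}(\cdot,\hat\theta_n^*)\,dx=0$ (i.e. $\hat\delta_n^*\in\mathcal{G}_{\hat\theta_n^*}^\perp$) annihilates the first-order population bias, while the empirical-minus-integral remainders, together with the in-sample coupling between the smoother and the noise, must be shown to combine into the admissible $o_p(n^{-1/2})$ and into the already-identified leading term. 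Controlling these requires the Donsker property of balls in $H^m(\Omega)$ for $m>d/2$ (via the embedding $\mathcal{N}_K(\Omega)\hookrightarrow H^m(\Omega)$), the $L_2$ and native-norm rates of Theorem~\ref{th_prediction}, and the sub-exponential tail condition (\ref{subexponential}). I expect this in-sample smoother analysis to be the principal obstacle, precisely because the overfitting of $\hat\zeta_n$ at the design points prevents a naive replacement of empirical averages by population integrals.

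Finally I would assemble the pieces into a relation of the form [leading stochastic term] $= cV(\hat\theta_n^*-\theta^*)+o_p(n^{-1/2})+o_p(\|\hat\theta_n^*-\theta^*\|)$ with $V$ nonsingular. Since the stochastic term is $O_p(n^{-1/2})$, this first upgrades the consistency of Theorem~\ref{th_calibration} to $\|\hat\theta_n^*-\theta^*\|=O_p(n^{-1/2})$, after which the $o_p(\|\hat\theta_n^*-\theta^*\|)$ term is absorbed and inverting $V$ yields the asymptotic representation (\ref{efficiency}). This matches the benchmark (\ref{benchmark}) and establishes the claimed semi-parametric efficiency.
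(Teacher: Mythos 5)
Your overall architecture and everything downstream of your estimating equation match the paper's proof: the paper also extracts the first-order condition from the formulation $l(\theta,\delta)$ with $\delta$ ranging over the fixed space $\mathcal{N}_K(\Omega)$, kills the penalty derivative via Lemma \ref{le_dev} and $\lambda_n=o(n^{-1/2})$ (see (\ref{op})), and then performs exactly your $I_1$/$I_2$ analysis in (\ref{I1I2})--(\ref{I2}) — asymptotic equicontinuity on a Donsker class, the orthogonality $\hat\delta_n^*\perp\mathcal{G}_{\hat\theta_n^*}$ to annihilate the nuisance bias, a Taylor expansion giving the $\tfrac12 V$ Hessian factor, and inversion of $V$.

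However, there is a genuine gap in how you arrive at the estimating equation. You lump the ``moving-space'' derivative (the term produced by the $\theta$-dependence of the projection applied to the nuisance) together with the penalty derivative and claim both are $o_p(n^{-1/2})$. The penalty derivative is indeed $O_p(\lambda_n)$, but the moving-space term is not negligible, and bounding it is circular. By (\ref{projection}) and Leibniz's rule (cf.\ (\ref{devperp})), at $\theta=\hat\theta_n^*$, where $b_{\hat\theta_n^*}=0$, one has $\frac{\partial}{\partial\theta_i}\mathcal{P}_{\mathcal{G}_\theta}\hat\delta_n^*\big|_{\theta=\hat\theta_n^*}=\frac{\partial b_\theta^T}{\partial\theta_i}\big|_{\theta=\hat\theta_n^*}D_{\hat\theta_n^*}^{-1}\frac{\partial y^s}{\partial\theta}(\cdot,\hat\theta_n^*)$, so this contribution to the gradient equals $2\,\frac{\partial b_\theta^T}{\partial\theta}\big|_{\theta=\hat\theta_n^*}D_{\hat\theta_n^*}^{-1}S_n$ with $S_n=\frac1n\sum_{i=1}^n\bigl(y_i^p-\hat\zeta_n(x_i)\bigr)\frac{\partial y^s}{\partial\theta}(x_i,\hat\theta_n^*)$ — i.e.\ an $O_p(1)$ matrix times the very score vector you are trying to prove is $o_p(n^{-1/2})$. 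That matrix does not vanish asymptotically: its entries are $\langle\hat\delta_n^*,\frac{\partial^2 y^s}{\partial\theta_i\partial\theta_j}(\cdot,\hat\theta_n^*)\rangle_{L_2(\Omega)}$, which converge to the corresponding inner products with $\delta_0=\zeta-y^s(\cdot,\theta^*)\neq 0$ whenever there is genuine model discrepancy — precisely the regime this theory addresses (it would vanish only in the no-discrepancy case). The paper's fix is structural rather than a bound: collect terms so that the first-order condition reads $\bigl(D_{\hat\theta_n^*}-\frac{\partial b_\theta^T}{\partial\theta}\big|_{\theta=\hat\theta_n^*}\bigr)D_{\hat\theta_n^*}^{-1}S_n=o_p(n^{-1/2})$ as in (\ref{firsteq}), use the identity $2D_{\hat\theta_n^*}-2\frac{\partial b_\theta^T}{\partial\theta}\big|_{\theta=\hat\theta_n^*}=\int_\Omega\frac{\partial^2}{\partial\theta^T\partial\theta}\bigl(\hat\zeta_n^*(x)-y^s(x,\hat\theta_n^*)\bigr)^2dx$, which converges in probability to $V$ by (\ref{L2rate}), (\ref{supbound4}) and consistency, and then invoke positive definiteness of $V$ in (\ref{Istar}) together with (\ref{eigenbound}) to invert the matrix prefactor with probability tending to one; only after this inversion does $S_n=o_p(n^{-1/2})$ follow. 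This matrix-inversion step is the essential use of (\ref{eigenbound}) that your proposal misses — you cite that condition, but only to claim the projections are well-conditioned, not to justify inverting the score's prefactor.
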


We find that the asymptotic representation of $\hat{\theta}_n^*-\theta^*$ agrees with (\ref{benchmark}). This suggests that the proposed projected kernel calibration also achieves the semi-parametric efficiency.

\subsection{Bayesian Interpretation}\label{sec_Bayesian}

From (\ref{PKC}), the projected kernel calibration proceeds by solving a one-step minimization problem, which differs from the $L_2$ calibration in (\ref{L2step1})-(\ref{L2step2}). This formulation allows us to present a Bayesian interpretation for the proposed calibration method.

It follows from the representer's Theorem \cite{wahba1990spline,scholkopf2001generalized} that (\ref{PKC}) is equivalent to the following optimization problem:
\begin{eqnarray}\label{PKmode}
\min_{\theta\in\Theta,\alpha\in\mathbb{R}^n}\frac{1}{n}\sum_{i=1}^n\left(y_i^p-\sum_{j=1}^n\alpha_j K_{\mathcal{G}_\theta}(x_i,x_j)-y^s(x_i,\theta)\right)^2\nonumber \\+\lambda\sum_{i=1}^n\sum_{j=1}^n\alpha_i\alpha_j K_{\mathcal{G}_\theta}(x_i,x_j),
\end{eqnarray}
where $\alpha=(\alpha_1,\ldots,\alpha_n)$.

This expression gives a natural Bayesian interpretation of the projected kernel calibration. Specifically, consider the following Bayesian problem with the likelihood function
\begin{eqnarray}\label{PK1}
&&L_{PK}(\alpha,\theta)\nonumber\\ &\propto&\exp\left\{-\frac{1}{n}\sum_{i=1}^n\left(y_i^p-\sum_{j=1}^n\alpha_j K_{\mathcal{G}_\theta}(x_i,x_j)-y^s(x_i,\theta)\right)^2\right\},
\end{eqnarray}
and the prior distribution
\begin{eqnarray}\label{PK2}
\pi_{PK}(\alpha,\theta)\propto \exp\left\{-\lambda \sum_{i=1}^n\sum_{j=1}^n \alpha_i\alpha_j K_{\mathcal{G}_\theta}(x_i,x_j)\right\}.
\end{eqnarray}
Clearly, (\ref{PKmode}) is the posterior mode of the Bayesian problem (\ref{PK1})-(\ref{PK2}). In view of this connection, we recommend using the posterior distribution given by (\ref{PK1})-(\ref{PK2}) for statistical inference.

In the literature of computer experiments, it is a common practice to use a family of covariance functions $K_\phi$ instead of a fixed kernel $K$, where the hyper-parameter $\phi$ is assumed to have a hyper-prior distribution; see \cite{santner2003design}. A similar extension can be made to the model (\ref{PK1})-(\ref{PK2}) by allowing $K$ to be indexed by a hyper-parameter. Another extension is to incorporate expensive computer experiments. By applying a standard technique in computer experiments, we can model the computer output as a realization of a Gaussian process. The above extensions are straightforward and we omit the details.

Now we make a comparison between the model (\ref{PK1})-(\ref{PK2}) and the model proposed by \cite{kennedy2001bayesian}. For simplicity, we assume that the computer code is cheap and the covariance function $K$ is known. According to the discussions in the preceding paragraph, these simplifications do not affect the general message to be delivered later. This simplified version of the KO model has a likelihood function
\begin{eqnarray}\label{KO1}
&&L_{KO}(\alpha,\theta)\nonumber\\
&\propto& \exp\left\{-\frac{1}{n}\sum_{i=1}^n\left(y_i^p-\sum_{j=1}^n\alpha_j K(x_i,x_j)-y^s(x_i,\theta)\right)^2\right\},
\end{eqnarray}
and prior distribution
\begin{eqnarray}\label{KO2}
\pi_{KO}(\alpha,\theta)\propto \exp\left\{-\lambda \sum_{i=1}^n\sum_{j=1}^n \alpha_i\alpha_j K(x_i,x_j)\right\}.
\end{eqnarray}

By comparing (\ref{PK1})-(\ref{PK2}) and (\ref{KO1})-(\ref{KO2}), it can be seen that the only difference between the KO model and the proposed model is on the use of the kernel: the KO method use the original kernel $K$, while the proposed method uses the projected kernel $K_{\mathcal{G}_\theta}$. \cite{tuo2016calibration} proves that the KO method is inconsistent in calibration. The above discussion shows that this inconsistency problem can be rectified by replacing $K$ with $K_{\mathcal{G}_\theta}$. This modification keeps the major steps of the KO method, and thus most of the computer code for KO method can be reused. Given the wide spread use of the KO method, the proposed method is potentially quite impactful.

We also compare the proposed method with the Bayesian method proposed by \cite{plumlee2016bayesian}, which is based on an orthogonal Gaussian process (OGP) modeling technique \cite{plumlee2016orthogonal}. As addressed in Section \ref{sec_ogp}, the covariance function of an orthogonal Gaussian process is a projected kernel function. Therefore, the likelihood function of the method by \cite{plumlee2016bayesian} is
\begin{eqnarray}\label{OGP1}
&&L_{OGP}(\alpha,\theta)\nonumber\\
&\propto& \exp\left\{-\frac{1}{n}\sum_{i=1}^n\left(y_i^p-\sum_{j=1}^n\alpha_j K(x_i,x_j)-y^s(x_i,\theta)\right)^2\right\},
\end{eqnarray}
and the prior distribution is
\begin{eqnarray}\label{OGP2}
\pi_{OGP}(\alpha,\theta)\propto (\det \mathbf{K}_\theta)^{-1/2}\exp\left\{-\lambda \sum_{i=1}^n\sum_{j=1}^n \alpha_i\alpha_j K(x_i,x_j)\right\},
\end{eqnarray}
where $\mathbf{K}_\theta=(K_{\mathcal{G}_\theta}(x_i,x_j))_{i j}$. The model (\ref{OGP1})-(\ref{OGP2}) is very close to the proposed model (\ref{PK1})-(\ref{PK2}), expect for a determinant factor $(\det \mathbf{K}_\theta)^{-1/2}$ in the prior. The OGP-based model has this determinant factor because it is part of the density function of a multivariate normal distribution. So far we are unclear about the role of the determinant factor from a theoretical point of view.

\section{Discussion}\label{sec_discussion}

In this work, we propose a novel method for the calibration of computer models. The proposed method enjoys three nice properties: consistency, semi-parametric efficiency and it has a Bayesian version. Thus it is a desirable method from both theoretical and practical points of view. The Bayesian version of the proposed method can be regarded as a modification of the widely used KO method. The inconsistency problem of KO method is rectified by this modification.  Another related method is the orthogonal Gaussian process models proposed by \cite{plumlee2016bayesian}. We conjecture that the posterior mode of this model is asymptotically equivalent to that of the proposed method.

At the end of this article, we would like to point out that the proposed projected kernel method can be used in areas beyond computer experiments. In fact, we have proposed a nonparametric regression method, which gives an estimate satisfying certain orthogonality constraints. Since orthogonality is a widely used concept in statistics and machine learning today, the proposed method can be potentially useful in related problems.

\section{Technical Proofs}\label{sec_proofs}

In this section we prove the theorems stated in Section \ref{sec_KernelsProperties} and \ref{sec_PKC}. Some necessary lemmas are also introduced.

\subsection{Proof of Theorem \ref{Thpositivedefinite}}
Because $\tilde{R}$ is continuous, $R$ possesses a nonnegative-valued Fourier transformation. Without loss of generality, we assume $R(0)=1$. Then the spectral theory of stationary processes asserts that there exists a stationary Gaussian process $Z(\cdot)$ over $\mathbb{R}^d$ with correlation function $R(\cdot)$. See, for example, \cite{stein1999interpolation} or any time series textbook for detailed discussions. By normalization, we assume $E(Z(x))=0$ and $E(Z^2(x))=1$.

From (\ref{Kg}) it can be seen that
\begin{eqnarray*}
	K_\mathcal{G}(s,t)=Cov\left(Z(s)-\sum_{i=1}^m e_i(s)\int_\Omega Z(x)e_i(x)d x,\nonumber\right.\\\left. Z(t)-\sum_{i=1}^m e_i(t)\int_\Omega Z(x)e_i(x)d x\right).
\end{eqnarray*}
Thus $K_\mathcal{G}$ is not positive definite if and only if there exists $N\in\mathbb{N}$, distinct points $x_1,\ldots,x_N\in\Omega$ and $\alpha=(\alpha_1,\ldots,\alpha_N)\in\mathbb{R}^N\setminus\{0\}$ such that
\begin{eqnarray}
\sum_{j=1}^N \alpha_j \left(Z(x_j)-\sum_{i=1}^m e_i(x_j)\int_\Omega Z(x)e_i(x)d x\right)=0\label{lineardependence}
\end{eqnarray}
almost surely. Suppose (\ref{lineardependence}) holds for distinct points $x_1,\ldots,x_n$. Then it suffices to prove that $\alpha=0$.

Rearranging (\ref{lineardependence}), we obtain
\begin{eqnarray}
\sum_{j=1}^N\alpha_j Z(x_j)&=&\sum_{j=1}^N\alpha_j\sum_{i=1}^m e_i(x_j)\int_\Omega Z(x)e_i(x)d x\nonumber\\
&=&\int_\Omega Z(x)\left(\sum_{j=1}^N\sum_{i=1}^m \alpha_j e_i(x_i) e_i(x)\right)d x\label{lineardependence2}
\end{eqnarray}
almost surely.
Set $g_0(\cdot)=\sum_{j=1}^N\sum_{i=1}^m \alpha_j e_i(x_i) e_i(\cdot)\in L_2(\Omega)$. For any $x_0\in\mathbb{R}^d$, multiplying by $Z(x_0)$ and taking expectation on both sides of (\ref{lineardependence2}), together with Fubini's Theorem yields
\begin{eqnarray}
\sum_{j=1}^N\alpha_j R(x_0-x_j)=\int_\Omega R(x_0-x) g_0(x) d x,\label{integraleq}
\end{eqnarray}
for all $x_0\in\mathbb{R}^d$.
Define $g_0^E\in L_2(\mathbb{R}^d)$ by
\begin{eqnarray*}
	g_0^E(x)=\begin{cases}
		g_0(x), & \text{for $x\in\Omega,$}\\
		0, & \text{for $x\not\in\Omega.$}
	\end{cases}
\end{eqnarray*}
Then (\ref{integraleq}) becomes
\begin{eqnarray}
\sum_{j=1}^N\alpha_j R(x_0-x_j)=\int_{\mathbb{R}^d} R(x_0-x) g_0^E(x) d x,\label{convolution}
\end{eqnarray}
for all $x_0\in\mathbb{R}^d$. Because $\mathfrak{m}(\Omega)<\infty$, we have $L_2(\Omega)\subset L_1(\Omega)$. Therefore, $g_0\in L_1(\Omega)$ and thus $g_0^E\in L_1(\mathbb{R}^d)$. Taking the Fourier transform with respect to $x_0$ on both sides of (\ref{convolution}) and applying the convolution theorem gives
\begin{eqnarray*}
	\sum_{j=1}^N\alpha e^{-i x^T x_i}\tilde{R}(x)=\tilde{R}(x)\widetilde{g_0^E}(x),
\end{eqnarray*}
for $x\in\mathbb{R}^d$, which, together with $\tilde{R}>0$ almost everywhere, implies
\begin{eqnarray}
\sum_{j=1}^N\alpha_i e^{-i x^T x_i}=\widetilde{g_0^E}(x)\label{fourier}
\end{eqnarray}
for $x$ almost everywhere in $\mathbb{R}^d$. Because $x_i$'s are distinct points, it is easily verified that the left hand side of (\ref{fourier}) is not in $L_2(\mathbb{R}^d)$ unless $\alpha=0$. Noting the fact that the Fourier transform of every integrable function in $L_2(\mathbb{R}^d)$ is also in $L_2(\mathbb{R}^d)$, we obtain $\alpha=0$, which completes the proof.

\subsection{Some Lemmas}\label{sec_lemma}

We introduce some lemmas in this section.
Lemma \ref{th_equality} and Lemma \ref{th_kg} gives some simple but useful formulae of the projected operator and the projected kernel. Their proofs are rather elementary and thus are omitted. Lemma \ref{th_embedding} and \ref{Thmercer} are fundamental results in the theory of native spaces. Lemma \ref{th_embedding} describes some simple embedding relationships in native spaces.
Lemma \ref{Thmercer} gives a full characterization for the native spaces with the help of the eigenvalues and eigenfunctions of $\kappa(\Phi,\cdot)$. We refer to \cite{wendland2005scattered} for the proof of Lemma \ref{th_embedding}. A proof of Lemma \ref{Thmercer} can be found in \cite{schaback1999native}.
Lemma \ref{th_projection} plays an important role in the proof of Theorem \ref{th_norminequality}.

\begin{lemma}\label{th_equality}
	For any $u\in L_2(\Omega\times\Omega),f\in L_2(\Omega)$, the following statements are true.
	
		(i) $\mathcal{P}_\mathcal{G}^{(1)}\mathcal{P}_\mathcal{G}^{(2)}u=\mathcal{P}_\mathcal{G}^{(2)}\mathcal{P}_\mathcal{G}^{(1)}u$.
		
		(ii) $\kappa(\mathcal{P}_\mathcal{G}^{(1)} u,f)=\mathcal{P}_\mathcal{G} \kappa(u,f)$.
		
		(iii) $\kappa(\mathcal{P}_\mathcal{G}^{(2)} u,f)=\kappa(u,\mathcal{P}_\mathcal{G}f)$.
\end{lemma}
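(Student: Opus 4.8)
The plan is to verify each of the three identities by direct computation: fix an orthonormal basis $\{e_i\}_{i=1}^m$ of $\mathcal{G}$, substitute the definitions of $\mathcal{P}_\mathcal{G}^{(1)}$, $\mathcal{P}_\mathcal{G}^{(2)}$, $\kappa$, and $\mathcal{P}_\mathcal{G}$ into each side, and match the resulting expressions term by term. Because the sum over the finite basis may be freely moved across the integral signs, the only genuine analytic ingredient is Fubini's theorem. Its hypotheses are met because, for each pair $(i,j)$, the tensor $e_i\otimes e_j\colon (s,t)\mapsto e_i(s)e_j(t)$ lies in $L_2(\Omega\times\Omega)$ with $\|e_i\otimes e_j\|_{L_2(\Omega\times\Omega)}=\|e_i\|_{L_2(\Omega)}\|e_j\|_{L_2(\Omega)}$, so that $u\cdot(e_i\otimes e_j)\in L_1(\Omega\times\Omega)$ by Cauchy--Schwarz; the one-variable integrals appearing in (ii)--(iii) are handled the same way with $f$ in place of one factor.

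For part (i), I would insert the formula for $\mathcal{P}_\mathcal{G}^{(2)}u$ into that for $\mathcal{P}_\mathcal{G}^{(1)}$, pull the finite sums outside, and apply Fubini to collapse the two nested integrals into a single integral over $\Omega\times\Omega$. This yields the symmetric double sum $\sum_{i,j}e_i(x)e_j(y)\int_{\Omega\times\Omega}u(s,t)e_i(s)e_j(t)\,ds\,dt$. Carrying out the reversed substitution for $(\mathcal{P}_\mathcal{G}^{(2)}\mathcal{P}_\mathcal{G}^{(1)}u)(x,y)$ produces exactly the same expression, since the integrand and the product $e_i(x)e_j(y)$ are symmetric in the roles of the two compositions; hence the two operators agree.

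For parts (ii) and (iii), I would evaluate the left-hand side of each pointwise and reassemble the right-hand side. In (ii), substituting $\mathcal{P}_\mathcal{G}^{(1)}u$ into $\kappa(\cdot,f)$ and interchanging the $y$- and $s$-integrations (Fubini) rewrites the inner integral as $\int_\Omega \kappa(u,f)(s)e_i(s)\,ds=\langle\kappa(u,f),e_i\rangle_{L_2(\Omega)}$, so the whole expression becomes $\sum_i\langle\kappa(u,f),e_i\rangle_{L_2(\Omega)}e_i=\mathcal{P}_\mathcal{G}\kappa(u,f)$. In (iii), substituting $\mathcal{P}_\mathcal{G}^{(2)}u$ and integrating against $f$ factors the $y$-integral as the scalar $\langle f,e_j\rangle_{L_2(\Omega)}$; pulling these coefficients back inside the remaining $t$-integral rebuilds $\mathcal{P}_\mathcal{G}f$ and gives $\int_\Omega u(x,t)(\mathcal{P}_\mathcal{G}f)(t)\,dt=\kappa(u,\mathcal{P}_\mathcal{G}f)$.

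There is no substantive obstacle beyond careful tracking of dummy variables; all three identities are formal consequences of the definitions. The only step warranting explicit mention is the legitimacy of Fubini's theorem, which, as noted above, follows purely from the $L_2$-membership of $u$, $f$, and the basis functions $e_i$ via Cauchy--Schwarz on the product domain, and in particular requires no finiteness assumption on the measure of $\Omega$.
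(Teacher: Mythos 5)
Your proof is correct. The paper itself omits the proof of this lemma, stating only that the arguments are ``rather elementary,'' and your direct verification---substituting the definitions over a fixed orthonormal basis of $\mathcal{G}$ and justifying the interchange of integrals via Fubini's theorem, whose hypotheses follow from Cauchy--Schwarz applied to products of $L_2$ functions---is precisely the routine computation the paper intends, including the correct observation that no finiteness assumption on $\mathfrak{m}(\Omega)$ is needed.
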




\begin{lemma}\label{th_kg}
	Suppose $K(\cdot,\cdot)$ is a positive definite function over $\Omega\times\Omega$, $\mathcal{G}$ is a finite dimensional subspace of $L_2(\Omega)$.
	The following statements for $K_\mathcal{G}$ are true.
	
		(i) For all $f\in L_2$, $\kappa(K_\mathcal{G},f)\in \mathcal{G}^\perp$.
		
		(ii) For all $f\in\mathcal{G}$, $\kappa(K_\mathcal{G},f)=0$.
		
		(iii) For all $f\in\mathcal{G}^\perp$, $\kappa(K_\mathcal{G},f)=\mathcal{P}_\mathcal{G}^\perp\kappa(K,f)$.
		
		(iv) For all $f\in\mathcal{G}^\perp$, $\langle f,\kappa(K,f)\rangle_{L_2(\Omega)}=\langle f,\kappa(K_\mathcal{G},f)\rangle_{L_2(\Omega)}$.
\end{lemma}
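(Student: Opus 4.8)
The plan is to reduce all four statements to a single master identity for $\kappa(K_\mathcal{G},\cdot)$, after which each claim falls out in one line. First I would expand, using the definition (\ref{Def1}) and the bilinearity of $\kappa$ in its first argument,
\begin{eqnarray*}
\kappa(K_\mathcal{G},f)=\kappa(K,f)-\kappa(\mathcal{P}_\mathcal{G}^{(1)}K,f)-\kappa(\mathcal{P}_\mathcal{G}^{(2)}K,f)+\kappa(\mathcal{P}_\mathcal{G}^{(1)}\mathcal{P}_\mathcal{G}^{(2)}K,f),
\end{eqnarray*}
and then rewrite each correction term via Lemma \ref{th_equality}. Part (ii) replaces $\kappa(\mathcal{P}_\mathcal{G}^{(1)}\,\cdot\,,f)$ by $\mathcal{P}_\mathcal{G}\kappa(\,\cdot\,,f)$, and part (iii) replaces $\kappa(\mathcal{P}_\mathcal{G}^{(2)}\,\cdot\,,f)$ by $\kappa(\,\cdot\,,\mathcal{P}_\mathcal{G}f)$; for the mixed term both are applied in succession. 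This produces $\mathcal{P}_\mathcal{G}\kappa(K,f)$, $\kappa(K,\mathcal{P}_\mathcal{G}f)$, and $\mathcal{P}_\mathcal{G}\kappa(K,\mathcal{P}_\mathcal{G}f)$ for the three corrections respectively.

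Collecting terms, factoring $\mathcal{P}_\mathcal{G}^\perp=I-\mathcal{P}_\mathcal{G}$ out to the left, and using linearity of $\kappa$ in its second argument to combine $f-\mathcal{P}_\mathcal{G}f=\mathcal{P}_\mathcal{G}^\perp f$, I expect to arrive at the master identity
\begin{eqnarray*}
\kappa(K_\mathcal{G},f)=\mathcal{P}_\mathcal{G}^\perp\,\kappa(K,\mathcal{P}_\mathcal{G}^\perp f).
\end{eqnarray*}
With this in hand the four conclusions are immediate. Statement (i) holds because the outer $\mathcal{P}_\mathcal{G}^\perp$ forces the output into $\mathcal{G}^\perp$ for every $f$. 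For (ii), any $f\in\mathcal{G}$ satisfies $\mathcal{P}_\mathcal{G}^\perp f=0$, so the right-hand side vanishes. For (iii), any $f\in\mathcal{G}^\perp$ satisfies $\mathcal{P}_\mathcal{G}^\perp f=f$, and the identity collapses to $\kappa(K_\mathcal{G},f)=\mathcal{P}_\mathcal{G}^\perp\kappa(K,f)$.

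For (iv) I would combine (iii) with the self-adjointness and idempotence of the orthogonal projection $\mathcal{P}_\mathcal{G}^\perp$: for $f\in\mathcal{G}^\perp$,
\begin{eqnarray*}
\langle f,\kappa(K_\mathcal{G},f)\rangle_{L_2(\Omega)}=\langle f,\mathcal{P}_\mathcal{G}^\perp\kappa(K,f)\rangle_{L_2(\Omega)}=\langle \mathcal{P}_\mathcal{G}^\perp f,\kappa(K,f)\rangle_{L_2(\Omega)}=\langle f,\kappa(K,f)\rangle_{L_2(\Omega)}.
\end{eqnarray*}
I do not anticipate any genuine obstacle, consistent with the authors' remark that the proof is elementary. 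The one step demanding care is the bookkeeping for the mixed term $\kappa(\mathcal{P}_\mathcal{G}^{(1)}\mathcal{P}_\mathcal{G}^{(2)}K,f)$, where parts (ii) and (iii) of Lemma \ref{th_equality} must be applied in the right order so that the $\mathcal{P}_\mathcal{G}$ ends up outside $\kappa$ and the $\mathcal{P}_\mathcal{G}f$ ends up inside; getting this wrong would spoil the clean factorization on which the master identity rests.
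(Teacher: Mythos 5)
Your proof is correct, and there is nothing in the paper to compare it against line by line: the paper explicitly omits the proof of Lemma \ref{th_kg}, remarking only that it is ``rather elementary,'' and your argument is precisely the intended elementary route --- reducing the definition (\ref{Def1}) to the master identity $\kappa(K_\mathcal{G},f)=\mathcal{P}_\mathcal{G}^\perp\kappa(K,\mathcal{P}_\mathcal{G}^\perp f)$ by parts (ii) and (iii) of Lemma \ref{th_equality}, from which (i)--(iv) follow at once. The only caveat you flag, the order of operations in the mixed term, is not a genuine obstacle: since $\mathcal{P}_\mathcal{G}^{(1)}$ and $\mathcal{P}_\mathcal{G}^{(2)}$ commute by part (i) of Lemma \ref{th_equality}, either order of application yields $\mathcal{P}_\mathcal{G}\kappa(K,\mathcal{P}_\mathcal{G}f)$, so the factorization cannot be spoiled.
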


\begin{lemma}\label{th_embedding}
	Suppose $\Omega\subset \mathbb{R}^d$ is compact and $\Phi$ is a symmetric positive definite function over $\Omega\times\Omega$. Then the native space $\mathcal{N}_\Phi(\Omega)$ has a continuous linear embedding into $L_2(\Omega)$ satisfiying
	\begin{eqnarray}
	\|f\|_{L_2(\Omega)}\leq C\|f\|_{\mathcal{N}_\Phi(\Omega)},\label{lemmaieq1}
	\end{eqnarray}
	with $C=(\int_\Omega \Phi(x,x)d x)^{1/2}$. Moreover, the integral operator $\kappa(\Phi,\cdot)$ maps $L_2(\Omega)$ continuously into $\mathcal{N}_\Phi(\Omega)$ and satisfies
	\begin{eqnarray}
	\langle f, v\rangle_{L_2(\Omega)}=\langle f,\kappa (\Phi,v)\rangle_{\mathcal{N}_\Phi(\Omega)},\label{lemmaieq2}
	\end{eqnarray}
	for all $f\in\mathcal{N}_\Phi(\Omega)$ and $v\in L_2(\Omega)$.
	The range of $\kappa(\Phi,\cdot)$ is dense in $\mathcal{N}_\Phi(\Omega)$.
\end{lemma}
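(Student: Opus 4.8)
The plan is to build everything on two ingredients: the reproducing property of $\Phi$ in $\mathcal{N}_\Phi(\Omega)$, namely $\langle f,\Phi(\cdot,x)\rangle_{\mathcal{N}_\Phi(\Omega)}=f(x)$ for every $f\in\mathcal{N}_\Phi(\Omega)$ and $x\in\Omega$, together with the Riesz representation theorem on the Hilbert space $\mathcal{N}_\Phi(\Omega)$. The reproducing property holds on the dense subspace $F_\Phi(\Omega)$ directly from the defining bilinear form, since $\langle\sum_i\beta_i\Phi(\cdot,x_i),\Phi(\cdot,x)\rangle_{\mathcal{N}_\Phi(\Omega)}=\sum_i\beta_i\Phi(x_i,x)=f(x)$. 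The first task is therefore to extend it to all of $\mathcal{N}_\Phi(\Omega)$ by showing point evaluation is a bounded functional, which simultaneously certifies that abstract Cauchy limits in $\mathcal{N}_\Phi(\Omega)$ are genuine continuous functions on $\Omega$.

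For the embedding inequality (\ref{lemmaieq1}), I would combine the reproducing property with the Cauchy--Schwarz inequality: $|f(x)|=|\langle f,\Phi(\cdot,x)\rangle_{\mathcal{N}_\Phi(\Omega)}|\le\|f\|_{\mathcal{N}_\Phi(\Omega)}\sqrt{\Phi(x,x)}$, where I used $\|\Phi(\cdot,x)\|_{\mathcal{N}_\Phi(\Omega)}^2=\Phi(x,x)$. Squaring and integrating over $\Omega$ gives $\|f\|_{L_2(\Omega)}^2\le\|f\|_{\mathcal{N}_\Phi(\Omega)}^2\int_\Omega\Phi(x,x)dx$, which is the claim; finiteness of $C$ follows from continuity of $\Phi$ on the compact set $\Omega\times\Omega$ and $\mathfrak{m}(\Omega)<\infty$.

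For the operator $\kappa(\Phi,\cdot)$ and identity (\ref{lemmaieq2}), the cleanest route avoids interchanging an integral with the inner product directly. Fix $v\in L_2(\Omega)$ and consider the functional $L_v(f)=\langle f,v\rangle_{L_2(\Omega)}$ on $\mathcal{N}_\Phi(\Omega)$. By (\ref{lemmaieq1}) it is bounded, $|L_v(f)|\le\|v\|_{L_2(\Omega)}\|f\|_{L_2(\Omega)}\le C\|v\|_{L_2(\Omega)}\|f\|_{\mathcal{N}_\Phi(\Omega)}$, so Riesz representation produces a unique $g_v\in\mathcal{N}_\Phi(\Omega)$ with $\langle f,g_v\rangle_{\mathcal{N}_\Phi(\Omega)}=\langle f,v\rangle_{L_2(\Omega)}$ for all $f$. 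Evaluating this at $f=\Phi(\cdot,x)$ and using the reproducing property identifies $g_v(x)=\langle\Phi(\cdot,x),v\rangle_{L_2(\Omega)}=\int_\Omega\Phi(x,y)v(y)dy=\kappa(\Phi,v)(x)$. Hence $\kappa(\Phi,v)=g_v\in\mathcal{N}_\Phi(\Omega)$, identity (\ref{lemmaieq2}) is precisely the Riesz relation, and continuity follows from $\|\kappa(\Phi,v)\|_{\mathcal{N}_\Phi(\Omega)}=\|L_v\|\le C\|v\|_{L_2(\Omega)}$.

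Finally, for density of the range I would show that its orthogonal complement in $\mathcal{N}_\Phi(\Omega)$ is trivial: if $\langle f,\kappa(\Phi,v)\rangle_{\mathcal{N}_\Phi(\Omega)}=0$ for all $v\in L_2(\Omega)$, then (\ref{lemmaieq2}) gives $\langle f,v\rangle_{L_2(\Omega)}=0$ for all $v$, so $f=0$ almost everywhere; since every element of $\mathcal{N}_\Phi(\Omega)$ is continuous, $f\equiv 0$ and thus $\|f\|_{\mathcal{N}_\Phi(\Omega)}=0$. I expect the main obstacle to be the foundational step underlying all of the above: rigorously identifying the abstract completion that defines $\mathcal{N}_\Phi(\Omega)$ with a space of honest functions on $\Omega$ on which point evaluation is continuous and $\Phi$ reproduces, i.e.\ ruling out that a Cauchy sequence for $\langle\cdot,\cdot\rangle_\Phi$ fails to possess a well-defined pointwise limit. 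This is the standard but delicate part of native-space theory, and it is precisely why the statement is quoted from \cite{wendland2005scattered} rather than reproved in full here.
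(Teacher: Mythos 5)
Nothing to compare against here: the paper never proves this lemma, but imports it wholesale from \cite{wendland2005scattered} (Lemma 10.27 and Proposition 10.28), as it says just before the statement. Your proof is correct, and it is in essence the standard argument underlying those cited results: the reproducing property plus Cauchy--Schwarz gives (\ref{lemmaieq1}); the Riesz representation theorem identifies $\kappa(\Phi,v)$ as the representer of the bounded functional $f\mapsto\langle f,v\rangle_{L_2(\Omega)}$, which is exactly (\ref{lemmaieq2}) together with the continuity bound $\|\kappa(\Phi,v)\|_{\mathcal{N}_\Phi(\Omega)}\leq C\|v\|_{L_2(\Omega)}$; and density of the range follows because, by (\ref{lemmaieq2}), its orthogonal complement consists of functions vanishing almost everywhere. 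One hypothesis you use silently deserves to be flagged: continuity of $\Phi$ (assumed where the paper sets up native spaces in Section \ref{sec_nativespace}, though omitted from the lemma statement as quoted) is what makes $C$ finite, makes every element of $\mathcal{N}_\Phi(\Omega)$ continuous, and hence lets you upgrade ``$f=0$ a.e.'' to ``$f\equiv 0$'' in the density step; that last step also tacitly uses that $\Omega$ (a convex body in this paper) is the closure of its interior, since a continuous function can vanish almost everywhere without vanishing identically when $\Omega$ has empty interior.
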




\begin{lemma}\label{Thmercer}
	Let $\Phi(\cdot,\cdot)$ be a continuous and positive definite function over $\Omega\times\Omega$, where $\Omega$ is a compact subset of $\mathbb{R}^d$. Then there is an orthonormal set $\{\phi_i\}_{i=1}^\infty$ in $L_2(\Omega)$ consisting of the eigenfunctions of $\kappa(\Phi,\cdot)$ such that
	$\kappa(\Phi,\phi_i)=\rho_i\phi_i$
	for eigenvalues $\rho_i> 0$ with $\|\phi_i\|_{L_2(\Omega)}=1$. Then the native space generated by $\Phi$ is embedded into $L_2(\Omega)$ with
	\begin{eqnarray}
	\mathcal{N}_\Phi(\Omega)=\left\{f\in L_2(\Omega):f=\sum_{i=1}^\infty\langle f,\phi_i\rangle_{L_2(\Omega)}\phi_i,\right.\nonumber\\ \left.\text{ with }\sum_{i=1}^{\infty}\frac{1}{\rho_i}|\langle f,\phi_i\rangle_{L_2(\Omega)}|^2<\infty\right\} \label{nativespace}
	\end{eqnarray}
	and the inner product has the representation
	\begin{eqnarray}
	\langle f,g\rangle_{\mathcal{N}_\Phi(\Omega)}=\sum_{i=1}^\infty \frac{1}{\rho_i}\langle f,\phi_i\rangle_{L_2(\Omega)}\langle g,\phi_i\rangle_{L_2(\Omega)},\label{nativeinnerproduct}
	\end{eqnarray}
	for $f,g\in\mathcal{N}_\Phi(\Omega)$.
\end{lemma}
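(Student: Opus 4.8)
The plan is to identify $\mathcal{N}_\Phi(\Omega)$ with the weighted sequence space on the right-hand side of (\ref{nativespace}) by exhibiting that space as a reproducing kernel Hilbert space whose kernel is $\Phi$, and then to invoke the uniqueness of the reproducing kernel Hilbert space attached to a given positive definite kernel. First I would analyze the integral operator $T:=\kappa(\Phi,\cdot)$ on $L_2(\Omega)$. Since $\Omega$ is compact and $\Phi$ is continuous, $\Phi\in L_2(\Omega\times\Omega)$, so $T$ is Hilbert--Schmidt, hence compact; symmetry of $\Phi$ makes $T$ self-adjoint, and positive definiteness makes it a positive (and, for a strictly positive definite kernel, injective) operator. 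The spectral theorem for compact self-adjoint operators then yields an orthonormal system $\{\phi_i\}$ with eigenvalues $\rho_i>0$ satisfying $\kappa(\Phi,\phi_i)=\rho_i\phi_i$, and Mercer's theorem upgrades the $L_2$ expansion of $\Phi$ to the uniformly convergent representation $\Phi(x,y)=\sum_i\rho_i\phi_i(x)\phi_i(y)$; in particular $\sum_i\rho_i\phi_i(x)^2=\Phi(x,x)$ pointwise.

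Next I would define the candidate Hilbert space $\mathcal{H}$ as the set on the right of (\ref{nativespace}), equipped with the inner product (\ref{nativeinnerproduct}). It is complete because $f\mapsto(\rho_i^{-1/2}\langle f,\phi_i\rangle_{L_2(\Omega)})_i$ is an isometry onto a closed subspace of $\ell^2$. Using $\langle\Phi(\cdot,y),\phi_i\rangle_{L_2(\Omega)}=(\kappa(\Phi,\phi_i))(y)=\rho_i\phi_i(y)$, the diagonal Mercer identity gives $\sum_i\rho_i^{-1}|\rho_i\phi_i(y)|^2=\Phi(y,y)<\infty$, so $\Phi(\cdot,y)\in\mathcal{H}$; and for any $f\in\mathcal{H}$,
\[
\langle f,\Phi(\cdot,y)\rangle_{\mathcal{H}}=\sum_i\frac{1}{\rho_i}\langle f,\phi_i\rangle_{L_2(\Omega)}\,\rho_i\phi_i(y)=\sum_i\langle f,\phi_i\rangle_{L_2(\Omega)}\phi_i(y)=f(y),
\]
so $\Phi$ is the reproducing kernel of $\mathcal{H}$.

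Finally I would match $\mathcal{H}$ with the native space. On the generators $\Phi(\cdot,x_i)$ the reproducing property gives $\langle\Phi(\cdot,x_i),\Phi(\cdot,x_j)\rangle_{\mathcal{H}}=\Phi(x_i,x_j)=\langle\Phi(\cdot,x_i),\Phi(\cdot,x_j)\rangle_\Phi$, so $F_\Phi(\Omega)$ embeds isometrically into $\mathcal{H}$. Moreover $F_\Phi(\Omega)$ is dense in $\mathcal{H}$: any $f\in\mathcal{H}$ that is $\mathcal{H}$-orthogonal to every $\Phi(\cdot,x)$ satisfies $f(x)=\langle f,\Phi(\cdot,x)\rangle_{\mathcal{H}}=0$, hence $f\equiv 0$. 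Since $\mathcal{H}$ is complete, it is exactly the completion of $F_\Phi(\Omega)$ under $\langle\cdot,\cdot\rangle_\Phi$, i.e. $\mathcal{N}_\Phi(\Omega)=\mathcal{H}$ with matching inner products, which is the assertion.

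The hard part is the analytic input from Mercer's theorem, specifically the uniform convergence of $\sum_i\rho_i\phi_i(x)\phi_i(y)$ and the resulting pointwise diagonal identity $\sum_i\rho_i\phi_i(x)^2=\Phi(x,x)$, which underpins both $\Phi(\cdot,y)\in\mathcal{H}$ and the reproducing computation; establishing it requires Dini's theorem applied to the monotone continuous partial sums on the diagonal. One must also confirm that strict positive definiteness rules out a nontrivial kernel of $T$, so that the eigenfunctions with $\rho_i>0$ capture all of the relevant space and no functions are lost in the characterization (\ref{nativespace}).
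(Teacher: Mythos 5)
The paper offers no proof of this lemma: it is stated as a known result, with the proof deferred to the cited reference \cite{schaback1999native}, so there is no internal argument to compare against. Your proposal supplies the standard self-contained argument that underlies that reference: spectral theory for the compact, self-adjoint, positive integral operator $T=\kappa(\Phi,\cdot)$; Mercer's theorem for the uniformly convergent expansion $\Phi(x,y)=\sum_i\rho_i\phi_i(x)\phi_i(y)$; verification that the weighted sequence space $\mathcal{H}$ on the right of (\ref{nativespace}) is a reproducing kernel Hilbert space with kernel $\Phi$; and identification $\mathcal{H}=\mathcal{N}_\Phi(\Omega)$ via the isometric embedding of $F_\Phi(\Omega)$ and its density in $\mathcal{H}$. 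This is correct in substance and is exactly the kind of proof the paper is pointing to.

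Two refinements. First, your closing worry about injectivity of $T$ is unnecessary, and it is fortunate that it is, because pointwise strict positive definiteness of a continuous kernel does not in general imply that $T$ is injective on $L_2(\Omega)$ (that is the strictly stronger ``integrally strictly positive definite'' property). Nothing is lost either way: Mercer's expansion places every $\Phi(\cdot,x)$ in the closed $L_2$-span of the eigenfunctions with $\rho_i>0$, hence the whole native space (the completion of $F_\Phi(\Omega)$, which embeds continuously into $L_2(\Omega)$ by Lemma \ref{th_embedding}) lies in that closed span, and the set in (\ref{nativespace}) by definition contains only functions in that span; so the two sides match regardless of whether $T$ has a null space. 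Second, the reproducing computation $\langle f,\Phi(\cdot,y)\rangle_{\mathcal{H}}=f(y)$ silently identifies the $L_2$-class $f$ with a pointwise-defined function; you should note that the series $\sum_i\langle f,\phi_i\rangle_{L_2(\Omega)}\phi_i(y)$ converges absolutely and uniformly on $\Omega$ by Cauchy--Schwarz together with the diagonal identity $\sum_i\rho_i\phi_i(y)^2=\Phi(y,y)$, and that this continuous sum is taken as the canonical representative of $f$. With these glosses your argument is complete and matches the literature proof the paper relies on.
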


\begin{lemma}\label{th_projection}
	Suppose $\Omega\subset \mathbb{R}^d$ is compact and $K$ is a symmetric positive definite function over $\Omega\times\Omega$. Then for any $f\in\mathcal{N}_K(\Omega),h\in L_2(\Omega)$, we have
	\begin{eqnarray*}
		\|\kappa(K,\langle f,h\rangle_{L_2(\Omega)} f)\|_{\mathcal{N}_K(\Omega)}\leq \|\kappa(K,h)\|_{\mathcal{N}_K(\Omega)} \|f\|_{\mathcal{N}_K(\Omega)}\|\kappa(K,f)\|_{\mathcal{N}_K(\Omega)}.
	\end{eqnarray*}
\end{lemma}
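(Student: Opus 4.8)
The plan is to reduce the claimed inequality to a single application of Cauchy--Schwarz in $\mathcal{N}_K(\Omega)$, using the reproducing-type identity from Lemma~\ref{th_embedding}. The first observation is that $\langle f,h\rangle_{L_2(\Omega)}$ is merely a scalar, so by linearity of $\kappa$ in its second argument,
\begin{equation*}
\kappa(K,\langle f,h\rangle_{L_2(\Omega)} f)=\langle f,h\rangle_{L_2(\Omega)}\,\kappa(K,f),
\end{equation*}
and hence the left-hand side of the asserted bound equals $|\langle f,h\rangle_{L_2(\Omega)}|\,\|\kappa(K,f)\|_{\mathcal{N}_K(\Omega)}$. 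Since the factor $\|\kappa(K,f)\|_{\mathcal{N}_K(\Omega)}$ appears identically on the right-hand side, it suffices to establish
\begin{equation*}
|\langle f,h\rangle_{L_2(\Omega)}|\leq \|f\|_{\mathcal{N}_K(\Omega)}\,\|\kappa(K,h)\|_{\mathcal{N}_K(\Omega)},
\end{equation*}
after which one multiplies through by the common factor.

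Next I would invoke Lemma~\ref{th_embedding}. Since $h\in L_2(\Omega)$, the integral operator $\kappa(K,\cdot)$ maps $h$ into $\mathcal{N}_K(\Omega)$, so $\kappa(K,h)$ is a legitimate element of the native space; and identity (\ref{lemmaieq2}) with $v=h$ gives
\begin{equation*}
\langle f,h\rangle_{L_2(\Omega)}=\langle f,\kappa(K,h)\rangle_{\mathcal{N}_K(\Omega)}
\end{equation*}
for every $f\in\mathcal{N}_K(\Omega)$. Applying the Cauchy--Schwarz inequality to the inner product on $\mathcal{N}_K(\Omega)$ then yields the displayed bound on $|\langle f,h\rangle_{L_2(\Omega)}|$ immediately, and multiplying by $\|\kappa(K,f)\|_{\mathcal{N}_K(\Omega)}$ recovers the statement of the lemma.

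There is essentially no obstacle here: once the scalar is factored out, the entire content is the reproducing identity (\ref{lemmaieq2}) followed by Cauchy--Schwarz. The only point requiring a word of care is the possibility that $\|\kappa(K,f)\|_{\mathcal{N}_K(\Omega)}=0$, in which case the common factor cannot be removed by division; but this case is trivial, since both sides of the asserted inequality then vanish. This also explains why the lemma is written with the seemingly redundant factor $\|\kappa(K,f)\|_{\mathcal{N}_K(\Omega)}$ present on both sides rather than in the cancelled form: the cancelled inequality is the genuine content, while the stated shape is precisely what is needed when the lemma is fed into the proof of Theorem~\ref{th_norminequality}.
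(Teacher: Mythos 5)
Your proof is correct, but it follows a genuinely different route from the paper's. The paper invokes Lemma \ref{Thmercer}: it expands $f=\sum_i\eta_i\phi_i$ and $h=\sum_i\tau_i\phi_i$ in the eigenbasis of $\kappa(K,\cdot)$, computes
$\|\kappa(K,\langle f,h\rangle_{L_2(\Omega)}f)\|^2_{\mathcal{N}_K(\Omega)}
=\bigl(\sum_k\tau_k\eta_k\bigr)^2\bigl(\sum_k\rho_k\eta_k^2\bigr)$,
and then applies the weighted Cauchy--Schwarz inequality
$\bigl(\sum_k\tau_k\eta_k\bigr)^2\leq\bigl(\sum_k\rho_k\tau_k^2\bigr)\bigl(\sum_k\eta_k^2/\rho_k\bigr)$
to recognize the three norms on the right-hand side. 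You instead factor out the scalar $\langle f,h\rangle_{L_2(\Omega)}$ and bound it by $\|f\|_{\mathcal{N}_K(\Omega)}\|\kappa(K,h)\|_{\mathcal{N}_K(\Omega)}$ via identity (\ref{lemmaieq2}) of Lemma \ref{th_embedding} --- i.e., the fact that $\kappa(K,\cdot)$ acts as the adjoint of the $L_2$ embedding --- followed by Cauchy--Schwarz in $\mathcal{N}_K(\Omega)$. Both arguments ultimately establish the same key estimate $|\langle f,h\rangle_{L_2(\Omega)}|\leq\|f\|_{\mathcal{N}_K(\Omega)}\|\kappa(K,h)\|_{\mathcal{N}_K(\Omega)}$ and then multiply by $\|\kappa(K,f)\|_{\mathcal{N}_K(\Omega)}$ (the paper does this implicitly inside its coordinate computation), but yours is coordinate-free and rests only on Lemma \ref{th_embedding}, so it does not need the continuity of $K$ that Mercer's theorem requires; the paper's version makes the spectral structure explicit, which fits with how that machinery is reused elsewhere (e.g., the remark bounding $\|\kappa(K,g)\|_{\mathcal{N}_K(\Omega)}$ by $\rho_{max}\|g\|_{\mathcal{N}_K(\Omega)}$). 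One small simplification to your write-up: the cautionary case $\|\kappa(K,f)\|_{\mathcal{N}_K(\Omega)}=0$ needs no separate treatment, since you never divide --- multiplying the scalar inequality through by this nonnegative factor is valid in all cases.
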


\begin{proof}
	Let $\{\phi_i\}_{i=1}^\infty$ be an orthonormal basis of $L_2(\Omega)$ consisting of the eigenfunctions of $\kappa(K,\cdot)$. Denote the set of the corresponding eigenvalues by $\{\rho_i\}_{i=1}^\infty$. Because $f,h\in L_2(\Omega)$, they admits a unique representation using the basis functions, which is denoted as
	\begin{eqnarray*}
		f=\sum_{i=1}^\infty\eta_i\phi_i,h=\sum_{i=1}^\infty\tau_i\phi_i.
	\end{eqnarray*}
	First, by the definition of the eigenvalues and eigenfunctions we have
	\begin{eqnarray*}
		\kappa(K,\langle f,h\rangle_{L_2(\Omega)} f)&=&\kappa\left(K,\sum_{i=1}^\infty\left(\sum_{k=1}^\infty\tau_k\eta_k\right)\eta_n\phi_n\right).\\
		&=&\left(\sum_{k=1}^\infty\tau_k\eta_k\right)\left(\sum_{i=1}^\infty\rho_n\eta_n\phi_n\right).
	\end{eqnarray*}
	Applying Lemma \ref{Thmercer} yields that
	\begin{eqnarray*}
		&&\|\kappa(K,\langle f,h\rangle_{L_2(\Omega)} f)\|^2_{\mathcal{N}_K(\Omega)}\\
		&=&\left(\sum_{k=1}^\infty\tau_k\eta_k\right)^2 \left(\sum_{i=1}^\infty\rho_n\eta_n^2\right)\\
		&\leq&\left(\sum_{k=1}^\infty \rho_k\tau_k^2\right)\left(\sum_{k=1}^\infty\frac{\eta_k^2}{\rho_k}\right)\left(\sum_{k=1}^\infty \rho_k\eta_k^2\right)\\
		&=&\|\kappa(K,h)\|^2_{\mathcal{N}_K(\Omega)} \|f\|^2_{\mathcal{N}_K(\Omega)}\|\kappa(K,f)\|^2_{\mathcal{N}_K(\Omega)},
	\end{eqnarray*}
	where the inequality follows from the Cauchy-Schwarz inequality. The desired result then follows.
\end{proof}

\subsection{Proof of Theorem \ref{th_norminequality}}


First we show (\ref{normineq}) holds for all $f$ in the range of $\kappa(K,\cdot)$.
For any $h\in L_2(\Omega)$, we have the following identity
\begin{eqnarray}
\kappa(K,\mathcal{P}_\mathcal{G}^\perp h)=\kappa(K,h)-\kappa(K,\mathcal{P}_\mathcal{G} h).\label{kappaidentity}
\end{eqnarray}
By Lemma \ref{th_embedding}, the three terms appeared in (\ref{kappaidentity}) lie in $\mathcal{N}_K(\Omega)$. We take the $\mathcal{N}_K(\Omega)$-norm on both sides of (\ref{kappaidentity}) and use the triangle inequality to find
\begin{eqnarray}
\|\kappa(K,\mathcal{P}_\mathcal{G}^\perp h)\|_{\mathcal{N}_K(\Omega)}\leq \|\kappa(K,h)\|_{\mathcal{N}_K(\Omega)}+ \|\kappa(K,\mathcal{P}_\mathcal{G} h)\|_{\mathcal{N}_K(\Omega)}.\label{normineq1}
\end{eqnarray}
Using Lemma \ref{th_embedding}, the square of the left hand side is
\begin{eqnarray}
&&\|\kappa(K,\mathcal{P}_\mathcal{G}^\perp h)\|_{\mathcal{N}_K(\Omega)}^2=\langle \kappa(K,\mathcal{P}_\mathcal{G}^\perp h),\mathcal{P}_\mathcal{G}^\perp h\rangle_{L_2(\Omega)}\nonumber\\
&=&\langle\kappa(K_\mathcal{G},\mathcal{P}_\mathcal{G}^\perp h),\mathcal{P}_\mathcal{G}^\perp h\rangle_{L_2(\Omega)}
=\langle\kappa(K_\mathcal{G}, h), h\rangle_{L_2(\Omega)}\nonumber\\
&=&\|\kappa(K_\mathcal{G},h)\|^2_{\mathcal{N}_{K_\mathcal{G}}(\Omega)}= \|\mathcal{P}_\mathcal{G}^\perp\kappa(K,h)\|^2_{\mathcal{N}_{K_\mathcal{G}}(\Omega)},\label{normequality}
\end{eqnarray}
where the first equality follows from (\ref{lemmaieq2}) with $\Phi=K$; the second equality follows from (iv) of Lemma \ref{th_kg}; the third equality follows from (i) and (ii) of Lemma \ref{th_kg}; the fourth equality follows from (\ref{lemmaieq2}) with $\Phi=K_\mathcal{G}$; the fifth equality follows from (iii) of Lemma \ref{th_kg}. Combining (\ref{normineq1}) and (\ref{normequality}), we have obtained
\begin{eqnarray}
\|\mathcal{P}_\mathcal{G}^\perp\kappa(K,h)\|_{\mathcal{N}_{K_\mathcal{G}}(\Omega)}\leq \|\kappa(K,h)\|_{\mathcal{N}_K(\Omega)}+ \|\kappa(K,\mathcal{P}_\mathcal{G} h)\|_{\mathcal{N}_K(\Omega)}.\label{normineq2}
\end{eqnarray}
Next we want to show that $\|\kappa(K,\mathcal{P}_\mathcal{G} h)\|_{\mathcal{N}_K(\Omega)}$ can be bounded by a multiple of $\|\kappa(K,h)\|_{\mathcal{N}_K(\Omega)}$. Let $h_0=\mathcal{P}_\mathcal{G} h/\|\mathcal{P}_\mathcal{G} h\|_{L_2(\Omega)}$, we have $\|h_0\|_{L_2(\Omega)}=1$ and $\mathcal{P}_\mathcal{G}h=\langle h,h_0\rangle_{L_2(\Omega)}h_0$. Apply Lemma \ref{th_projection} to arrive at
\begin{eqnarray*}
	&&\|\kappa(K,\mathcal{P}_\mathcal{G} h)\|_{\mathcal{N}_K(\Omega)}= \left\|\kappa\left(K,\langle h,h_0\rangle_{L_2(\Omega)}h_0\right)\right\|_{\mathcal{N}_K(\Omega)}\nonumber\\
	&\leq& \|\kappa(K,h)\|_{\mathcal{N}_K(\Omega)} \|h_0\|_{\mathcal{N}_K(\Omega)}\|\kappa(K,h_0)\|_{\mathcal{N}_K(\Omega)}\nonumber\\
	&\leq& \sup_{g\in\mathcal{G},\|g\|_{L_2(\Omega)}=1}\|g\|_{\mathcal{N}_K(\Omega)}\|\kappa(K,g)\|_{\mathcal{N}_K(\Omega)} \|\kappa(K,h)\|_{\mathcal{N}_K(\Omega)},
\end{eqnarray*}
which, together with (\ref{normineq2}), implies
\begin{eqnarray}
\|\mathcal{P}_\mathcal{G}^\perp\kappa(K,h)\|_{\mathcal{N}_{K_\mathcal{G}}(\Omega)}\leq C_1 \|\kappa(K,h)\|_{\mathcal{N}_K(\Omega)},\label{normineq3}
\end{eqnarray}
for all $h\in L_2(\Omega)$.

By Lemma \ref{th_embedding}, the range of $\kappa(K,\cdot)$ is dense in $\mathcal{N}_K(\Omega)$, i.e., for any $f\in \mathcal{N}_K(\Omega)$, there exists a sequence $e_i\in L_2(\Omega)$, $i=1,2,\ldots$, with
\begin{eqnarray}
\|f-\kappa(K,e_i)\|_{\mathcal{N}_K(\Omega)}\rightarrow 0 \label{convergence}
\end{eqnarray}
as $i\rightarrow\infty$. Noting that (\ref{convergence}) implies
\begin{eqnarray*}
	\|\kappa(K,e_i)-\kappa(K,e_j)\|_{\mathcal{N}_K(\Omega)}\rightarrow 0
\end{eqnarray*}
as $i,j\rightarrow\infty$, which, together with (\ref{normineq3}), implies that
\begin{eqnarray*}
	\|\mathcal{P}_\mathcal{G}^\perp\kappa(K,e_i)-\mathcal{P}_\mathcal{G}^\perp\kappa(K,e_j)\|_{\mathcal{N}_{K_\mathcal{G}}(\Omega)}\rightarrow 0,
\end{eqnarray*}
as $i,j\rightarrow\infty$. This suggests that $\mathcal{P}_\mathcal{G}^\perp\kappa(K,e_i)$ is a Cauchy sequence. Thus the completeness of $\mathcal{N}_{K_\mathcal{G}}(\Omega)$ ensures that $\{\mathcal{P}_\mathcal{G}^\perp\kappa(K,e_i)\}_{i=1}^\infty$ is a convergent sequence in $\mathcal{N}_{K_\mathcal{G}}(\Omega)$. By Lemma \ref{th_embedding}, $\mathcal{N}_{K_\mathcal{G}}(\Omega)$ is continuously embedded into $L_2(\Omega)$, which implies that the limiting function of $\mathcal{P}_\mathcal{G}^\perp\kappa(K,e_i)$ in $\mathcal{N}_{K_\mathcal{G}}(\Omega)$ is $\mathcal{P}_\mathcal{G}^\perp f$. The desired result then follows from the continuity of the native norm. This proves (i).


Take $f\in\mathcal{N}_{K_\mathcal{G}}$. The goal is to show that $\|f\|_{\mathcal{N}_K(\Omega)}$ is bounded above by a multiple of $\|f\|_{\mathcal{N}_{K_\mathcal{G}}(\Omega)}$. As before, we first suppose that $f$ is in the range of $\kappa(K_\mathcal{G},\cdot)$, say, $f=\kappa(K_\mathcal{G},h)$. Using (ii) of Lemma \ref{Kg}, we can assume $h\in\mathcal{G}^\perp$ without loss of generality. By (iii) of Lemma \ref{Kg},
\begin{eqnarray}
\kappa(K_\mathcal{G},h) = \mathcal{P}_\mathcal{G}^\perp\kappa(K,h)= \kappa(K,h)-\mathcal{P}_\mathcal{G}\kappa(K,f).\label{subset}
\end{eqnarray}
Let $h_1=\mathcal{P}_\mathcal{G}\kappa(K,f)/\|\mathcal{P}_\mathcal{G}\kappa(K,f)\|_{L_2(\Omega)}$. Then
\begin{eqnarray}
&&\|\mathcal{P}_\mathcal{G}\kappa(K,f)\|_{\mathcal{N}_K(\Omega)}\nonumber\\
&=&\|\langle \kappa(K,f),h_1\rangle_{L_2(\Omega)}h_1\|_{\mathcal{N}_K(\Omega)}\nonumber\\
&\leq& \|\kappa(K,f)\|_{L_2(\Omega)}\|h_1\|_{L_2(\Omega)}\|h_1\|_{\mathcal{N}_K(\Omega)}\nonumber\\
&\leq& \sup_{g\in\mathcal{G},\|g\|_{L_2(\Omega)}=1}\|g\|_{\mathcal{N}_K(\Omega)}\left(\int_\Omega K(x,x)d x\right)^{1/2}\|\kappa(K,f)\|_{\mathcal{N}_K(\Omega)},\label{normineq4}
\end{eqnarray}
where the first inequality follows from the Cauchy-Schwarz inequality; the second inequality follows from the fact that $h_1\in\mathcal{G}$ and Lemma \ref{th_embedding}.
Now we combine (\ref{subset}) and (\ref{normineq4}) to arrive at
\begin{eqnarray}
&&\|\kappa(K_\mathcal{G},h)\|_{\mathcal{N}_K(\Omega)} \leq C_1^{-1} \|\kappa(K,h)\|_{\mathcal{N}_K(\Omega)}\nonumber\\
&=&C_2 \|\kappa(K,\mathcal{P}^\perp_\mathcal{G}h)\|_{\mathcal{N}_K(\Omega)}= C_2\|\kappa(K_\mathcal{G},h)\|_{\mathcal{N}_{K_\mathcal{G}}(\Omega)},\label{normineq5}
\end{eqnarray}
where the last identity follows from (\ref{normequality}). To prove that (\ref{normineq5}) holds for a general $f\in\mathcal{N}_{K_\mathcal{G}}(\Omega)$ as well, one can apply a continuous-extension argument similar to that disclosed in the previous paragraph. This proves (ii).


\subsection{Proof of Theorem \ref{th_prediction}}

Let $I(g,\theta)=\|g\|_{\mathcal{N}_{K_{\mathcal{G}_\theta}}}$ for $\theta\in\Theta$ and $g\in\mathcal{N}_{K_\mathcal{\theta}}(\Omega)$. By Theorem \ref{th_norminequality}, (\ref{rhomax}) and (\ref{supbound}), we find that the following inequality holds for all $\theta\in\Theta$:
\begin{eqnarray}
C_4\|g\|_{\mathcal{N}_K(\Omega)}\leq I(g,\theta) \leq C_5\|g\|_{\mathcal{N}_K(\Omega)},\label{uniformineq}
\end{eqnarray}
with
\begin{eqnarray*}
	C_4^{-1}&=&1+C_3\left(\int_\Omega K(x,x)d x\right)^{1/2},\\
	C_5&=&1+\rho_{max}C_3^2,
\end{eqnarray*}
where $\rho_{max}$ denotes the maximum eigenvalue of $\kappa(K,\cdot)$.

Because $(\hat{\theta}_n,\hat{\delta}_n)$ minimizes (\ref{PKC}), we have
\begin{eqnarray*}
	&& \frac{1}{n}\sum_{i=1}^n(y_i^p-\hat{\zeta}(x_i))^2+\lambda_n \|\hat{\zeta}(\cdot)-y^s(\cdot,\hat{\theta}_n)\|^2_{\mathcal{N}_{K_{\mathcal{G}_{\hat{\theta}_n}}}(\Omega)}\\
	&\leq& \frac{1}{n}\sum_{i=1}^n(y_i^p-\zeta(x_i))^2+\lambda_n \|\zeta(\cdot)-y^s(\cdot,\theta_n)\|^2_{\mathcal{N}_{K_{\mathcal{G}_\theta}}(\Omega)},
\end{eqnarray*}
which, together with (\ref{uniformineq}) and (\ref{supbound2}), yields
\begin{eqnarray*}
	&& \frac{1}{n}\sum_{i=1}^n(y_i^p-\hat{\zeta}(x_i))^2+\lambda_n C_4 \left(\|\hat{\zeta}(\cdot)\|_{\mathcal{N}_K(\Omega)}^2-C'_3\right)^2\\
	&\leq& \frac{1}{n}\sum_{i=1}^n(y_i^p-\zeta(x_i))^2+\lambda_n C_5\|\zeta(\cdot)-y^s(\cdot,\theta_n)\|^2_{\mathcal{N}_{K_{\mathcal{G}_\theta}}(\Omega)}
\end{eqnarray*}

Define the empirical norm $\|h\|^2_n=n^{-1}\sum_{i=1}^n h^2(x_i)$.
Because $\mathcal{N}_K(\Omega)$ can be continuously embedded into $H^m(\Omega)$, we use the metric entropy of balls of Sobolev spaces \cite{vandegeer2000empirical,tuo2016calibration} and Theorem 5.11 of \cite{vandegeer2000empirical} to obtain the modulus of continuity of the empirical process $v(\zeta')=n^{-1/2}\sum_{i=1}^n e_i(\zeta'-\zeta)$ as
\begin{eqnarray}\label{modulus}
\sup_{\zeta'\in\mathcal{N}_K(\Omega)}\frac{n^{-1}\sum_{i=1}^n e_i(\zeta'-\zeta)}{\|\zeta'-\zeta\|^{1-d/2m}_n\|\zeta'\|^{d/2m}_{\mathcal{N}_K(\Omega)}}=O_p(n^{-1/2}).
\end{eqnarray}

The remainder of the proof is standard. Using the arguments similar to the proof of Theorem 10.2 of \cite{vandegeer2000empirical}, we obtain (\ref{Hmrate}) and
\begin{eqnarray}\label{empiricalrate}
\|\zeta-\hat{\zeta}\|_n =O_p(n^{-\frac{2 m}{2m+d}}).
\end{eqnarray}
Now invoking the condition that $x_i$'s follow the uniform distribution over $\Omega$, we apply Lemma 5.16 of \cite{vandegeer2000empirical} to conclude an asymptotic equivalence relation between the $L_2$ and the empirical norm as:
\begin{eqnarray}\label{equivalence}
\limsup_{n\rightarrow\infty}\mathbf{P}\left(\sup_{\substack{\|h\|_{\mathcal{N}_K(\Omega)}=O_p(1)\\ \|h\|_{L_2(\Omega)}>\tau n^{-\frac{m}{2m+d}}/\eta}}\left|\frac{\|h\|_n}{\|h\|_{L_2(\Omega)}}-1\right|\geq \eta\right)=0.
\end{eqnarray}
Combining (\ref{Hmrate}), (\ref{empiricalrate}) and (\ref{equivalence}), we prove (\ref{L2rate}).


\subsection{Proof of Theorem \ref{th_calibration}}

Let $f(\theta,\delta)=\frac{1}{n}\sum_{i=1}^n (y_i^p-\delta(x_i)-y^s(x_i,\theta))^2+\lambda_n\|\delta\|^2_{\mathcal{N}_{K_{\mathcal{G}_\theta}}(\Omega)}$ and $\delta^*(x)=\zeta(x)-y^s(x,\theta^*)$. We prove the desired results by showing that
\begin{eqnarray}
f(\theta^*,\delta^*) \leq \inf_{\|\theta-\theta^*\|=c n^{-\frac{m}{2m+d}},\delta} f(\theta,\delta),\label{localmin}
\end{eqnarray}
for sufficiently large $n$ and some constant $c>0$ to be specified later, where $\|\cdot\|$ denotes the usual Euclidean distance.

First we observe that for fixed $\theta\in\Theta$,
\begin{eqnarray*}
	&&\inf_{\delta\in\mathcal{N}_{K_{\mathcal{G}_\theta}}}\|\zeta(\cdot)-y^s(\cdot,\theta)-\delta(\cdot)\|^2_{L_2(\Omega)}\\
	&\geq & \inf_{\delta\perp \mathcal{G}_\theta}\|\zeta(\cdot)-y^s(\cdot,\theta)-\delta(\cdot)\|^2_{L_2(\Omega)}\\
	&=&\|\mathcal{P}_{\mathcal{G}_\theta}(\zeta(\cdot)-y^s(\cdot,\theta))\|^2_{L_2(\Omega)}\\
	&=&\sum_{i=1}^q \left\langle \zeta(\cdot)-y^s(\cdot,\theta), \frac{\partial y^s}{\partial \theta_i}(\cdot,\theta)\right\rangle^2_{L_2(\Omega)} \Big/ \left\| \frac{\partial y^s}{\partial \theta_i}(\cdot,\theta)\right\|^2_{L_2(\Omega)}\\
	&\geq& \left\|\int_\Omega (\zeta(x)-y^s(x,\theta))\frac{\partial y^s}{\partial \theta}(x,\theta) d x \right\|^2\Big/ \max_{1\leq i\leq q}\left\| \frac{\partial y^s}{\partial \theta_i}(\cdot,\theta)\right\|^2_{L_2(\Omega)}\\
	&=&\left\|\frac{1}{2}\int_\Omega \frac{\partial }{\partial \theta}(\zeta(x)-y^s(x,\theta))^2 d x \right\|^2\Big/ \max_{1\leq i\leq q}\left\| \frac{\partial y^s}{\partial \theta_i}(\cdot,\theta)\right\|^2_{L_2(\Omega)}.
\end{eqnarray*}
Therefore, using (\ref{supbound3}), Lebesgue's dominated convergence theorem and the positive definiteness of $V$, we can find constants $c_1,c_2>0$, such that for $\|\theta-\theta^*\|\leq c_1$,
\begin{eqnarray}
\inf_{\delta\in\mathcal{N}_{K_{\mathcal{G}_\theta}}}\|\zeta(\cdot)-y^s(\cdot,\theta)-\delta(\cdot)\|_{L_2(\Omega)} \geq c_2 \|\theta-\theta^*\|.\label{normlowerbd}
\end{eqnarray}

Now suppose (\ref{localmin}) is false. Then there exists $\tilde{\theta}$ with $\|\theta^*-\tilde{\theta}\|=c n^{-\frac{m}{2m+d}}$ so that
\begin{eqnarray}\label{basiceq1}
&&\frac{1}{n}\sum_{i=1}^n e_i^2+\lambda_n\|\delta^*\|^2_{\mathcal{N}_{K_{\mathcal{G}_{\theta^*}}}(\Omega)}\nonumber\\
&>&\frac{1}{n}\sum_{i=1}^n(e_i+\zeta(x_i)-\tilde{\delta}(x_i)-y^s(x_i,\tilde{\theta}))^2+\lambda_n\|\tilde{\delta}\|^2_{\mathcal{N}_{K_{\mathcal{G}_{\tilde{\theta}}}}(\Omega)},
\end{eqnarray}
which is equivalent to
\begin{eqnarray}\label{basiceq2}
&&\frac{2}{n}\sum_{i=1}^n e_i(\tilde{\delta}(x_i)+y^s(x_i,\tilde{\theta})-\zeta(x_i))+\lambda_n \|\delta^*\|^2_{\mathcal{N}_{K_{\mathcal{G}_{\theta^*}}}(\Omega)}\nonumber\\
&>&\|\tilde{\delta}(\cdot)+y^s(\cdot,\tilde{\theta})-\zeta(\cdot)\|_n^2+\lambda_n\|\tilde{\delta}\|^2_{\mathcal{N}_{K_{\mathcal{G}_{\tilde{\theta}}}}(\Omega)}.
\end{eqnarray}
From (\ref{basiceq1}) we have
\begin{eqnarray}\label{normupperbd}
\lambda_n\|\tilde{\delta}\|^2_{\mathcal{N}_{K_{\mathcal{G}_{\tilde{\theta}}}}(\Omega)} < \frac{1}{n}\sum_{i=1}^n e_i^2+\lambda_n\|\delta^*\|^2_{\mathcal{N}_{K_{\mathcal{G}_{\theta^*}}}(\Omega)}=O_p(\lambda_n),
\end{eqnarray}
which implies $\|\tilde{\delta}\|_{\mathcal{N}_{K_{\mathcal{G}_{\tilde{\theta}}}}(\Omega)}=O_p(1)$. Then analogous to (\ref{modulus}) we have
\begin{eqnarray}\label{modulus2}
&& \frac{2}{n}\sum_{i=1}^n e_i(\tilde{\delta}(x_i)+y^s(x_i,\tilde{\theta})-\zeta(x_i)) \\ &=&O_p(n^{-1/2})\|\tilde{\delta}(\cdot)+y^s(\cdot,\tilde{\theta})-\zeta(\cdot)\|_n^{1-d/2m}.
\end{eqnarray}
Combining (\ref{basiceq2})-(\ref{modulus2}), we arrive at
\begin{eqnarray*}
	O_p(n^{-1/2})\|\tilde{\delta}(\cdot)+y^s(\cdot,\tilde{\theta})-\zeta(\cdot)\|_n^{1-d/2m} >\|\tilde{\delta}(\cdot)+y^s(\cdot,\tilde{\theta})-\zeta(\cdot)\|_n^2,
\end{eqnarray*}
which gives $\|\tilde{\delta}(\cdot)+y^s(\cdot,\tilde{\theta})-\zeta(\cdot)\|_n=O_p(n^{-\frac{m}{2m+d}})$. This implies that there exists a constant $K_0$ such that the event $\{\|\tilde{\delta}(\cdot)+y^s(\cdot,\tilde{\theta})-\zeta(\cdot)\|_n\leq K_0 n^{-\frac{m}{2m+d}}\}$ occurs with probability tending to one.

On the other hand, choose $c=\max\{2K_0,\tau\}/c_2$, where $\tau$ is defined in (\ref{equivalence}). Choose $n$ sufficiently large so that $c n^{-\frac{m}{2m+d}}<c_1$. Then by (\ref{normlowerbd}) we have
\begin{eqnarray*}
	\|\tilde{\delta}(\cdot)+y^s(\cdot,\tilde{\theta})-\zeta(\cdot)\|_{L_2(\Omega)}\geq c_2 c n^{-\frac{m}{2m+d}},
\end{eqnarray*}
which, together with (\ref{normlowerbd}) and (\ref{equivalence}), yields
\begin{eqnarray}
\|\tilde{\delta}(\cdot)+y^s(\cdot,\tilde{\theta})-\zeta(\cdot)\|_n\geq(1+o_p(1))2K_0 n^{-\frac{m}{2m+d}}.
\end{eqnarray}
This leads to a contradiction.

\subsection{Calculus for Projected Kernels}

In this section, we introduce some calculus results which help prove the asymptotic normality of the projected kernel calibration.

Let $\mathcal{G}_\theta=\operatorname{span}\{g_{1\theta},g_{2\theta},\ldots,g_{q\theta}\}$ for $\theta\in\Theta$, where $g_{1\theta},\ldots,g_{q\theta}\in L_2(\Omega)$ are linearly independent and are differentiable with respect to $\theta$. Define the ``distance matrix'' $E_\theta=(\langle g_{i\theta},g_{i\theta}\rangle_{L_2(\Omega)})_{ij}$ and $\mathbf{g}_\theta=(g_{1\theta},\ldots,g_{q\theta})^T$.

Fix $f\in L_2(\Omega)$ and let $b_\theta=(\langle f,g_{1\theta}\rangle_{L_2(\Omega)},\ldots,\langle f,g_{q\theta}\rangle_{L_2(\Omega)})^T$. It follows from basic linear algebra that
\begin{eqnarray}\label{projection}
\mathcal{P}_{\mathcal{G}_\theta}f=b^T_\theta E_\theta^{-1}\mathbf{g}_\theta.
\end{eqnarray}

The goal of this section is to bound the derivative of the norm in the native spaces generated by projected kernels. 

\begin{lemma}\label{le_dev}
	Suppose $\sup_{1\leq j\leq q,x\in\Omega}|\frac{\partial g_j}{\partial \theta_i}(x)|<\infty$.
	Then, for any $f\in\mathcal{N}_K(\Omega)$,
	\begin{eqnarray*}
		\frac{\partial}{\partial\theta_i}\|\mathcal{P}^\perp_{\mathcal{G}_\theta}f\|^2_{\mathcal{N}_{K_{\mathcal{G}_\theta}}(\Omega)}  \leq 2\|f\|^2_{\mathcal{N}_K(\Omega)} \|\mathbf{v}_1\|\|\mathbf{v}_2\|/\lambda_{min}(E_\theta),
	\end{eqnarray*}
	where
	\begin{eqnarray*}
		\mathbf{v}_1 &=& \left(\|\kappa(K,g_{1\theta})\|_{\mathcal{N}_K(\Omega)},\ldots,\|\kappa(K,g_{q\theta})\|_{\mathcal{N}_K(\Omega)}\right)^T, \\
		\mathbf{v}_2 &=& \left(\left\|\frac{\partial g_{1\theta}}{\partial \theta_i}\right\|_{\mathcal{N}_K(\Omega)},\ldots,\left\|\frac{\partial g_{q\theta}}{\partial \theta_i}\right\|_{\mathcal{N}_K(\Omega)}\right)^T,
	\end{eqnarray*}
	and $\lambda_{min}(E_\theta)$ denotes the minimum eigenvalue of $E_\theta$.
\end{lemma}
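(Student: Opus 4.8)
The cornerstone of the argument is to strip the $\theta$-dependence out of the norm $\|\cdot\|_{\mathcal{N}_{K_{\mathcal{G}_\theta}}(\Omega)}$ \emph{before} differentiating. The plan is to invoke the identity (\ref{normequality}), which for $f=\kappa(K,h)$ in the range of $\kappa(K,\cdot)$ reads $\|\mathcal{P}^\perp_{\mathcal{G}_\theta}\kappa(K,h)\|^2_{\mathcal{N}_{K_{\mathcal{G}_\theta}}(\Omega)}=\|\kappa(K,\mathcal{P}^\perp_{\mathcal{G}_\theta}h)\|^2_{\mathcal{N}_K(\Omega)}$. The right-hand side is measured in the \emph{fixed} space $\mathcal{N}_K(\Omega)$, so that all of the $\theta$-dependence is now carried by the function inside the norm. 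Writing $\kappa(K,\mathcal{P}^\perp_{\mathcal{G}_\theta}h)=f-S_\theta$ with $S_\theta=\kappa(K,\mathcal{P}_{\mathcal{G}_\theta}h)$, and using the projection formula (\ref{projection}) together with the reproducing identity (\ref{lemmaieq2}) to replace $\langle h,g_{k\theta}\rangle_{L_2(\Omega)}$ by $\langle f,g_{k\theta}\rangle_{\mathcal{N}_K(\Omega)}$, I would obtain the explicit representation
\[
\|\mathcal{P}^\perp_{\mathcal{G}_\theta}f\|^2_{\mathcal{N}_{K_{\mathcal{G}_\theta}}(\Omega)}=\|f-S_\theta\|^2_{\mathcal{N}_K(\Omega)},\qquad S_\theta=\sum_{j,k}(E_\theta^{-1})_{jk}\langle f,g_{k\theta}\rangle_{\mathcal{N}_K(\Omega)}\kappa(K,g_{j\theta}),
\]
which, since $f\mapsto S_\theta$ is bounded and linear on $\mathcal{N}_K(\Omega)$ and the range of $\kappa(K,\cdot)$ is dense (Lemma \ref{th_embedding}), extends by continuity to every $f\in\mathcal{N}_K(\Omega)$.

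Second, I would differentiate this fixed-norm expression. Setting $R:=f-S_\theta$, the normal equations defining $S_\theta$ give the orthogonality $\langle R,g_{j\theta}\rangle_{\mathcal{N}_K(\Omega)}=0$ for every $j$, so that $\partial_{\theta_i}\|R\|^2_{\mathcal{N}_K(\Omega)}=-2\langle R,\partial_{\theta_i}S_\theta\rangle_{\mathcal{N}_K(\Omega)}$. Expanding $\partial_{\theta_i}S_\theta$ by the product rule splits it into contributions from $\partial_{\theta_i}(E_\theta^{-1})=-E_\theta^{-1}(\partial_{\theta_i}E_\theta)E_\theta^{-1}$, from $\partial_{\theta_i}\langle f,g_{k\theta}\rangle_{\mathcal{N}_K(\Omega)}=\langle f,\partial_{\theta_i}g_{k\theta}\rangle_{\mathcal{N}_K(\Omega)}$, and from $\partial_{\theta_i}\kappa(K,g_{j\theta})=\kappa(K,\partial_{\theta_i}g_{j\theta})$. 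Each pairing of $R$ with a term of the form $\kappa(K,\cdot)$ would be converted, via (\ref{lemmaieq2}), into an $L_2(\Omega)$ pairing, after which Cauchy--Schwarz in $\mathcal{N}_K(\Omega)$ is applied: pairings against $\kappa(K,g_{j\theta})$ produce the entries $\|\kappa(K,g_{j\theta})\|_{\mathcal{N}_K(\Omega)}$ of $\mathbf{v}_1$, while pairings of $f$ against $\partial_{\theta_i}g_{j\theta}$ produce the entries $\|\partial_{\theta_i}g_{j\theta}\|_{\mathcal{N}_K(\Omega)}$ of $\mathbf{v}_2$ together with a factor $\|f\|_{\mathcal{N}_K(\Omega)}$.

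Third, I would assemble these pieces. The inverse Gram matrix is controlled in spectral norm by $\|E_\theta^{-1}\|_2=1/\lambda_{min}(E_\theta)$, and collecting the two index-vectors through a matrix Cauchy--Schwarz of the shape $|\mathbf{v}_1^T E_\theta^{-1}\mathbf{v}_2|\le\|\mathbf{v}_1\|\,\|\mathbf{v}_2\|/\lambda_{min}(E_\theta)$ produces the factor $\|\mathbf{v}_1\|\,\|\mathbf{v}_2\|/\lambda_{min}(E_\theta)$. The remaining scalar quantities, each an $\mathcal{N}_K(\Omega)$ inner product of $f$ with a basis-derivative or with the coefficient vector built from $\langle f,g_{k\theta}\rangle_{\mathcal{N}_K(\Omega)}$, are bounded by $\|f\|^2_{\mathcal{N}_K(\Omega)}$, yielding the stated constant $2$.

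The hard part will be this third step. The differentiated expression is a sum of several structurally distinct terms, and steering \emph{all} of them into the single clean product $2\|f\|^2_{\mathcal{N}_K(\Omega)}\|\mathbf{v}_1\|\,\|\mathbf{v}_2\|/\lambda_{min}(E_\theta)$---rather than a looser bound carrying an extra factor of $\|R\|_{\mathcal{N}_K(\Omega)}$ or norms of the individual $g_{j\theta}$---demands performing each $L_2(\Omega)$-to-$\mathcal{N}_K(\Omega)$ conversion at precisely the right stage, so that only $\|\kappa(K,g_{j\theta})\|_{\mathcal{N}_K(\Omega)}$, $\|\partial_{\theta_i}g_{j\theta}\|_{\mathcal{N}_K(\Omega)}$, and $\|f\|_{\mathcal{N}_K(\Omega)}$ remain. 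A secondary technical point is to justify differentiating the limit obtained in the density argument, which I would handle by first establishing the estimate on the dense range of $\kappa(K,\cdot)$ and then passing to the limit uniformly on a neighborhood of $\theta$.
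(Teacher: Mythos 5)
Your first two steps are correct: the representation $\|\mathcal{P}^\perp_{\mathcal{G}_\theta}f\|^2_{\mathcal{N}_{K_{\mathcal{G}_\theta}}(\Omega)}=\|f-S_\theta\|^2_{\mathcal{N}_K(\Omega)}$ with $S_\theta=\sum_{j,k}(E_\theta^{-1})_{jk}\langle f,g_{k\theta}\rangle_{\mathcal{N}_K(\Omega)}\kappa(K,g_{j\theta})$, and the orthogonality $\langle R,g_{j\theta}\rangle_{\mathcal{N}_K(\Omega)}=0$, both check out. The gap is in your third step, exactly where you flagged the difficulty, and it is not a technicality. Expanding by the product rule,
\begin{eqnarray*}
\langle R,\partial_{\theta_i}S_\theta\rangle_{\mathcal{N}_K(\Omega)}
&=&\sum_{j,k}\partial_{\theta_i}(E_\theta^{-1})_{jk}\,\langle f,g_{k\theta}\rangle_{\mathcal{N}_K(\Omega)}\,\langle R,\kappa(K,g_{j\theta})\rangle_{\mathcal{N}_K(\Omega)}\\
&&+\sum_{j,k}(E_\theta^{-1})_{jk}\,\langle f,\partial_{\theta_i}g_{k\theta}\rangle_{\mathcal{N}_K(\Omega)}\,\langle R,\kappa(K,g_{j\theta})\rangle_{\mathcal{N}_K(\Omega)}\\
&&+\sum_{j,k}(E_\theta^{-1})_{jk}\,\langle f,g_{k\theta}\rangle_{\mathcal{N}_K(\Omega)}\,\langle R,\kappa(K,\partial_{\theta_i}g_{j\theta})\rangle_{\mathcal{N}_K(\Omega)}.
\end{eqnarray*}
The orthogonality you established kills none of these terms: by (\ref{lemmaieq2}) the pairings that actually occur are $\langle R,\kappa(K,g_{j\theta})\rangle_{\mathcal{N}_K(\Omega)}=\langle R,g_{j\theta}\rangle_{L_2(\Omega)}$, not $\langle R,g_{j\theta}\rangle_{\mathcal{N}_K(\Omega)}$, and $\langle R,g_{j\theta}\rangle_{L_2(\Omega)}$ is nonzero in general (an $\mathcal{N}_K$-orthogonality does not imply an $L_2$-orthogonality unless the $g_{j\theta}$ are essentially eigenfunctions of $\kappa(K,\cdot)$). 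Consequently the first sum survives, and any bound on it involves $\partial_{\theta_i}E_\theta$ and $\lambda_{min}(E_\theta)^{-2}$; the third sum survives and carries $|\langle f,g_{k\theta}\rangle_{\mathcal{N}_K(\Omega)}|\leq\|f\|_{\mathcal{N}_K(\Omega)}\|g_{k\theta}\|_{\mathcal{N}_K(\Omega)}$ — none of these quantities appear in the lemma's bound. Even the middle (``good'') sum only yields $\|f\|_{\mathcal{N}_K(\Omega)}\|R\|_{\mathcal{N}_K(\Omega)}\|\mathbf{v}_1\|\|\mathbf{v}_2\|/\lambda_{min}(E_\theta)$, and $\|R\|_{\mathcal{N}_K(\Omega)}\leq\|f\|_{\mathcal{N}_K(\Omega)}$ is false in general: $f\mapsto R$ is an \emph{oblique} projection ($R$ is $\mathcal{N}_K$-orthogonal to $\mathcal{G}_\theta$ while $S_\theta$ lies in $\operatorname{span}\{\kappa(K,g_{j\theta})\}$), and Theorem \ref{th_norminequality}(i) only controls it with a constant $C_1>1$.

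The idea you are missing is the reduction the paper performs before differentiating. Writing $f=\kappa(K,h)$, the paper identifies the target via (\ref{normequality}) with the $L_2$ quadratic form $\int_{\Omega\times\Omega}h(x)K_{\mathcal{G}_\theta}(x,y)h(y)\,dx\,dy$ and reduces to $h$ orthogonal to $\mathcal{G}_\theta$ \emph{in $L_2(\Omega)$} — in your notation, $\langle f,g_{k\theta}\rangle_{\mathcal{N}_K(\Omega)}=0$ for all $k$, so that $S_\theta=0$ and $R=f$ at the evaluation point. It then differentiates the kernel itself: writing $\mathcal{P}^{(1)}_{\mathcal{G}_\theta}K(x,y)=[b^T_\theta E_\theta^{-1}](y)\,\mathbf{g}_\theta(x)$ and keeping the entire $\theta$-dependent coefficient $b^T_\theta E_\theta^{-1}$ as a single factor multiplying the \emph{undifferentiated} $\mathbf{g}_\theta(x)$, every term containing $\partial_{\theta_i}(E_\theta^{-1})$ or $\partial_{\theta_i}b_\theta$ retains a factor $\mathbf{g}_\theta$ and is annihilated upon integration against $h$; the derivative of $\mathcal{P}^{(1)}_{\mathcal{G}_\theta}\mathcal{P}^{(2)}_{\mathcal{G}_\theta}K$ vanishes entirely for the same reason. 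Only the two clean terms $b^T_\theta(y)E_\theta^{-1}\partial_{\theta_i}\mathbf{g}_\theta(x)$ and its mirror image survive, giving precisely $2\|f\|^2_{\mathcal{N}_K(\Omega)}\|\mathbf{v}_1\|\|\mathbf{v}_2\|/\lambda_{min}(E_\theta)$. The same cancellation would rescue your framework — under $\langle f,g_{k\theta}\rangle_{\mathcal{N}_K(\Omega)}=0$ your first and third sums vanish and the middle one has $R=f$ — but that orthogonality reduction is the actual content of the paper's proof, and without it the terms you propose to steer into the clean bound simply do not go there.
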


\begin{proof}
	Following a continuous extension argument similar to that used in Theorem \ref{th_norminequality}, it suffices to prove that the desired inequality holds if $f$ lies in the range of $\kappa(K,\cdot)$. Suppose $f=\kappa(K,h)$ for $h\in L_2(\Omega)$. By identity (\ref{normequality}),
	\begin{eqnarray}\label{integral}
	\|\mathcal{P}^\perp_{\mathcal{G}_\theta}\kappa(K,h)\|^2_{\mathcal{N}_{K_{\mathcal{G}_\theta}}(\Omega)} &=& \int_{\Omega\times\Omega} h(x)K_{\mathcal{G}_\theta}(x,y)h(y)d x d y\\
	&=& \int_{\Omega\times\Omega} (\mathcal{P}^\perp_{\mathcal{G}_\theta}h)(x)K_{\mathcal{G}_\theta}(x,y)(\mathcal{P}^\perp_{\mathcal{G}_\theta}h)(y)d x d y.\nonumber
	\end{eqnarray}
	Hence, without loss of generality, we assume $h\in \mathcal{G}_\theta^{\perp}$.
	From the definition of $K_{\mathcal{G}_\theta}$ and the uniform boundedness of $\frac{\partial g_j}{\partial \theta_i}$, the conditions of Lebesgue's dominated convergence theorem are fulfilled, allowing us to calculate the derivative of (\ref{integral}) by interchanging differentiation and integration, which gives
	\begin{eqnarray*}
		\frac{\partial}{\partial\theta_i}\|\mathcal{P}^\perp_{\mathcal{G}_\theta}\kappa(K,h)\|^2_{\mathcal{N}_{K_{\mathcal{G}_\theta}}(\Omega)} = \int_{\Omega\times\Omega} h(x)\frac{\partial K_{\mathcal{G}_\theta}}{\partial\theta_i}(x,y)h(y)d x d y. 
	\end{eqnarray*}
	Use the definition of $K_{\mathcal{G}_\theta}$ to obtain
	\begin{eqnarray*}
		\frac{\partial}{\partial\theta_i}K_{\mathcal{G}_\theta} = \frac{\partial}{\partial\theta_i}\left\{-\mathcal{P}^{(1)}_{\mathcal{G}_\theta}K-\mathcal{P}^{(2)}_{\mathcal{G}_\theta}K+ \mathcal{P}^{(1)}_{\mathcal{G}_\theta}\mathcal{P}^{(2)}_{\mathcal{G}_\theta}K\right\}.
	\end{eqnarray*}
	
	Now we bound these three terms separately. Using (\ref{projection}), we get
	\begin{eqnarray*}
		\mathcal{P}^{(1)}_{\mathcal{G}_\theta}K(x,y) = b^T_\theta(y) E_\theta^{-1}\mathbf{g}_\theta(x),
	\end{eqnarray*}
	with $b^T_\theta(y)=(\int_{\Omega} K(s,y)g_1(s) d s,\ldots, \int_{\Omega} K(s,y)g_q(s) d s)$. Applying Leibniz's rule, we obtain
	\begin{eqnarray}\label{leibniz1}
	\frac{\partial}{\partial \theta_i}\mathcal{P}^{(1)}_{\mathcal{G}_\theta}K(x,y) = \frac{\partial(b^T_\theta E_\theta^{-1})}{\partial \theta_i}(y)\mathbf{g}_\theta(x)+ b^T_\theta(y) E_\theta^{-1}\frac{\partial\mathbf{g}_\theta}{\partial\theta_i}(x).
	\end{eqnarray}
	Noting that $h\perp \mathbf{g}_\theta$, the first term of
	$\int_{\Omega\times\Omega}h(x)\frac{\partial\mathcal{P}^{(1)}_{\mathcal{G}_\theta}K}{\partial \theta_i}(x,y) h(y) d x d y$
	corresponding to (\ref{leibniz1}) vanishes and we use dominated convergence theorem to find
	\begin{eqnarray}\label{derivativebound}
	&&\left|\int_{\Omega\times\Omega}h(x)\frac{\partial\mathcal{P}^{(1)}_{\mathcal{G}_\theta}K}{\partial \theta_i}(x,y) h(y) d x d y\right|\nonumber \\
	&=& \left|\left(\int_{\Omega} h(y) K(s,y)\mathbf{g}^T_\theta(s) d s d y\right) E_\theta^{-1} \int_{\Omega} \frac{\partial \mathbf{g}_\theta}{\partial\theta_i}(x) h(x) d x \right| \nonumber \\
	&=&\left| \langle \kappa(K,h),\kappa(K,\mathbf{g}^T_\theta)\rangle_{\mathcal{N}_K(\Omega)} E_\theta^{-1} \left\langle \frac{\partial \mathbf{g}_\theta}{\partial \theta_i}, \kappa(K,h)\right\rangle_{\mathcal{N}_K(\Omega)}\right| \nonumber \\
	&\leq& \|\kappa(K,h)\|^2_{\mathcal{N}_K(\Omega)}\|\mathbf{v}_1\|\|\mathbf{v}_2\|/\lambda_{min}(E_\theta),
	\end{eqnarray}
	where the second equality follows from (\ref{nativeinnerproduct}); the inequality follows from Cauchy-Schwarz inequality and the standard theory of quadratic forms.
	
	For a similar reason, $|\int_{\Omega\times\Omega}h(x)\frac{\partial\mathcal{P}^{(2)}_{\mathcal{G}_\theta}K}{\partial \theta_i}(x,y) h(y) d x d y|$ is bounded above by the right hand side of (\ref{derivativebound}) as well.
	
	Now we use (\ref{projection}) twice to find that $\mathcal{P}_{\mathcal{G}_\theta}^{(1)}\mathcal{P}_{\mathcal{G}_\theta}^{(2)}K(x,y)$ has the form
	\begin{eqnarray*}
		\sum_{1\leq j,k\leq q} c_{j k \theta} g_{i\theta}(x)g_{j\theta}(y),
	\end{eqnarray*}
	for some $c_{j k \theta}$ independent of $x$ and $y$. Then, $\frac{\partial}{\partial\theta_i}\mathcal{P}_{\mathcal{G}_\theta}^{(1)}\mathcal{P}_{\mathcal{G}_\theta}^{(2)}K(x,y)$ is given by
	\begin{eqnarray}\label{devp1p2}
	\sum_{1\leq j,k\leq q}\left\{\frac{\partial c_{j k \theta}}{\partial\theta_i}g_{i\theta}(x)g_{j\theta}(y)+c_{j k \theta}\frac{\partial g_{i\theta}}{\partial\theta_i}(x)g_{j\theta}(y)\nonumber\right.\\\left.+ c_{j k \theta}g_{i\theta}(x)\frac{\partial g_{j\theta}}{\partial\theta_i}(y)\right\}.
	\end{eqnarray}
	Because each term in (\ref{devp1p2}) has factor either $g_{i\theta}(x)$ or $g_{j\theta}(y)$, we use the condition $h\perp \mathbf{g}_\theta$ to conclude that
	\begin{eqnarray*}
		\int_{\Omega\times\Omega} h(x)\frac{\partial}{\partial\theta_i}\mathcal{P}_{\mathcal{G}_\theta}^{(1)}\mathcal{P}_{\mathcal{G}_\theta}^{(2)}K(x,y) h(y) =0.
	\end{eqnarray*}
	The proof is then completed.
\end{proof}

\subsection{Proof of Theorem \ref{th_normality}}
Define
\begin{eqnarray*}
	l(\theta,\delta) = \frac{1}{n}\sum_{i=1}^n (y_i^p-\mathcal{P}^\perp_{\mathcal{G}_\theta}\delta(x_i)-y^s(x_i,\theta))^2+ \lambda_n\|\mathcal{P}^\perp_{\mathcal{G}_\theta}\delta\|^2_{\mathcal{N}_{K_{\mathcal{G}_\theta}}(\Omega)}.
\end{eqnarray*}
It follows from Theorem \ref{th_norminequality} that $l(\theta,\delta)$ is finite for all $\theta\in\Theta$ and $\delta\in\mathcal{N}_K(\Omega)$. It can be seen from (\ref{projection}) and Lemma \ref{le_dev} that $l$ is differentiable with respect to $\theta$ for fixed $\delta$.
Because $\mathcal{P}^\perp_{\mathcal{G}_{\hat{\theta}^*_n}}\hat{\delta}_n^*=\hat{\delta}_n$, $(\hat{\theta}^*_n,\hat{\delta}_n)$ minimizes $l$.  Then we have
\begin{eqnarray}\label{variational}
\frac{\partial l(\cdot,\hat{\delta}_n)}{\partial \theta}\Big|_{\theta=\hat{\theta}^*_n}=0. 
\end{eqnarray}
Because $\mathcal{P}^\perp_{\mathcal{G}_\theta}\hat{\delta}_n^*=\hat{\delta}_n^*-\mathcal{P}_{\mathcal{G}_\theta}\hat{\delta}_n^*$, we use (\ref{projection}) and Leibniz's rule to derive
\begin{eqnarray}\label{devperp}
\frac{\partial}{\partial\theta}\mathcal{P}^\perp_{\mathcal{G}_\theta}\hat{\delta}_n^*\Big|_{\theta=\hat{\theta}_n^*}=-\frac{\partial b^T_\theta}{\partial\theta}\Big|_{\theta=\hat{\theta}_n^*}D_{\hat{\theta}_n^*}^{-1}\frac{\partial y^s}{\partial\theta}(\cdot,\hat{\theta}_n^*)-b^T_{\hat{\theta}_n^*}\frac{\partial (D_\theta^{-1} \frac{\partial y^s}{\partial\theta})}{\partial\theta}\Big|_{\theta=\hat{\theta}_n^*},
\end{eqnarray}
where $b_\theta=\langle \hat{\delta}_n^*(\cdot), \frac{\partial y^s(\cdot,\theta)}{\partial\theta}\rangle_{L_2(\Omega)}$. Because $\hat{\theta}^*_n\perp \mathcal{G}_{\hat{\theta}^*_n}$, we have $b_{\hat{\theta}_n^*}=0$ and thus the second term of (\ref{devperp}) vanishes. Using Theorem \ref{th_norminequality}, Lemma \ref{le_dev}, (\ref{Hmrate}), (\ref{supbound3}) and (\ref{supbound4}), we conclude that
\begin{eqnarray}\label{op}
\frac{\partial}{\partial\theta} \lambda_n\|\mathcal{P}^\perp_{\mathcal{G}_\theta}\hat{\theta}^*_n\|^2_{\mathcal{N}_{K_{\mathcal{G}_\theta}}(\Omega)} =O_p(\lambda_n)=o_p(n^{-1/2}).
\end{eqnarray}
Combining (\ref{variational}), (\ref{devperp}) and (\ref{op}) we obtain
\begin{eqnarray}\label{firsteq}
&& \frac{1}{n}\sum_{i=1}^n (y_i^p-\hat{\delta}_n^*(x_i)-y^s(x_i,\hat{\theta}_n^*))\left(I-\frac{\partial b^T_\theta}{\partial\theta}\Big|_{\theta=\hat{\theta}_n^*}D_{\hat{\theta}^*_n}^{-1}\right)\frac{\partial y^s}{\partial\theta}(x_i,\hat{\theta}_n^*)\nonumber\\
&=&\frac{1}{n}\sum_{i=1}^n (y_i^p-\hat{\delta}_n^*(x_i)-y^s(x_i,\hat{\theta}_n^*))\left(\left(D_{\hat{\theta}^*_n}-\frac{\partial b^T_\theta}{\partial\theta}\Big|_{\theta=\hat{\theta}_n^*}\right)D_{\hat{\theta}^*_n}^{-1}\right)\frac{\partial y^s}{\partial\theta}(x_i,\hat{\theta}_n^*),\nonumber\\
&=&o_p(n^{-1/2}),
\end{eqnarray}
where $I$ denotes the identity matrix. Direct calculation shows
\begin{eqnarray*}
	2 D_{\hat{\theta}^*_n}-2 \frac{\partial b^T_\theta}{\partial\theta}\Big|_{\theta=\hat{\theta}_n^*}= \int_{\Omega} \frac{\partial^2}{\partial\theta^T\partial\theta}(\hat{\zeta}_n^*(x)-y^s(x,\hat{\theta}_n^*))^2 d x,
\end{eqnarray*}
where $\hat{\zeta}_n^*(\cdot)=\hat{\delta}_n^*(\cdot)+y^s(\cdot,\hat{\theta}_n^*)$. Therefore, by (\ref{L2rate}) and the consistency of $\hat{\theta}^*_n$, $\int_{\Omega} \frac{\partial^2}{\partial\theta^T\partial\theta}(\hat{\zeta}_n^*(x)-y^s(x,\hat{\theta}_n^*))^2 d x$ tends to $I^*$ in probability. Hence by (\ref{Istar}) and (\ref{eigenbound}), the matrix $(D_{\hat{\theta}^*_n}-\frac{\partial b^T_\theta}{\partial\theta}|_{\theta=\hat{\theta}_n^*})D_{\hat{\theta}^*_n}^{-1}$ is invertible with probability tending to one, which, together with (\ref{firsteq}), implies
\begin{eqnarray}
o_p(n^{-1/2})&=&  \frac{1}{n}\sum_{i=1}^n (y_i^p-\hat{\delta}_n^*(x_i)-y^s(x_i,\hat{\theta}_n^*))\frac{\partial y^s}{\partial\theta}(x_i,\hat{\theta}_n^*)\nonumber\\
&=& \frac{1}{n}\sum_{i=1}^n (\zeta(x_i)-\hat{\delta}_n^*(x_i)-y^s(x_i,\hat{\theta}_n^*))\frac{\partial y^s}{\partial\theta}(x_i,\hat{\theta}_n^*)\nonumber\\ &&+\frac{1}{n}\sum_{i=1}^n e_i \frac{\partial y^s}{\partial\theta}(x_i,\hat{\theta}_n^*)\nonumber\\
&=:&I_1+I_2.\label{I1I2}
\end{eqnarray}
Because $\|\hat{\delta}_n^*\|_{\mathcal{N}_K(\Omega)}=O_p(1)$, $\hat{\delta}_n^*$ lies in a Donsker set with probability tending to one; see \cite{vandegeer2000empirical,tuo2014efficient}. Then we invoke the asymptotic equicontinuity of the empirical processes \cite{vandegeer2000empirical} and the conditions $\|\zeta(\cdot)-\hat{\delta}_n^*(\cdot)-y^s(\cdot,\hat{\theta}_n^*)\|_{L_2(\Omega)}=o_p(1)$ to conclude
\begin{eqnarray}\label{I1}
I_1 &=& \int_{\Omega} (\zeta(x)-\hat{\delta}_n^*-y^s(x,\hat{\theta}_n^*))\frac{\partial y^s}{\partial\theta}(x,\hat{\theta}_n^*)d x +o_p(n^{-1/2})\nonumber\\
&=& \int_{\Omega} (\zeta(x)-y^s(x,\hat{\theta}_n^*))\frac{\partial y^s}{\partial\theta}(x,\hat{\theta}_n^*)d x+o_p(n^{-1/2})\nonumber\\
&=& \frac{1}{2}\int_{\Omega}\frac{\partial}{\partial \theta^T\partial\theta}(\zeta(x)-y^s(x,\tilde{\theta}_n^*)) d x (\hat{\theta}_n^*-\theta^*)+o_p(n^{-1/2}),
\end{eqnarray}
with some $\tilde{\theta}_n^*$ lying between $\hat{\theta}_n^*$ and $\theta^*$,
where the second equality follows from the fact that $\hat{\delta}_n^*\perp \mathcal{G}_{\hat{\theta}_n^*}$; the last equality follows from Taylor expansion and the fact that $\int_{\Omega} (\zeta(x)-y^s(x,\theta^*))\frac{\partial y^s}{\partial\theta}(x,\theta^*)d x=0$. A similar asymptotic equicontinuity argument leads to an approximation to $I_2$:
\begin{eqnarray}\label{I2}
I_2 =\frac{1}{n}\sum_{i=1}^n e_i \frac{\partial y^s}{\partial\theta}(x_i,\theta^*)+o_p(n^{-1/2}).
\end{eqnarray}
The desired result follows from combining (\ref{I1I2}), (\ref{I1}) and (\ref{I2}).

\appendix

\bibliographystyle{siamplain}
\bibliography{calibration}

\end{document}